\documentclass{article}
\usepackage{amsfonts,amssymb,mathrsfs}

\usepackage{graphicx,color}
\usepackage{amsmath,epstopdf}


\pagestyle{empty}

\setlength{\textheight}{10.5in}
\setlength{\textwidth}{7.5in}
\setlength{\oddsidemargin}{-0.7in}
\setlength{\topmargin}{-1.00in}

\setlength{\arraycolsep}{6pt}
\setlength{\tabcolsep}{4pt}
\setlength{\unitlength}{1cm}
\setlength{\textwidth}{7.25in}
\setlength{\textheight}{9.5in}
\setlength{\oddsidemargin}{-0.5in}
\setlength{\evensidemargin}{-0.0in}
 
\setlength{\parindent}{0pt}
\setlength{\baselineskip}{12pt}
\linespread{1.1} 
\newtheorem{theorem}{Theorem}

\newtheorem{definition}[theorem]{Definition}

\newtheorem{lemma}[theorem]{Lemma}
\newtheorem{observation}[theorem]{Observation}
\newtheorem{notation}[theorem]{Notation}

\newtheorem{proposition}[theorem]{Proposition}

\newenvironment{proof}[1][Proof]{\textbf{#1.} }{\ \rule{0.5em}{0.5em}}

\definecolor{green}{rgb}{0.00,0.50,0.00}

\begin{document}

\title{Non-Markovian Quantum Feedback Networks II: Controlled Flows}
\author{John E.~Gough \footnote{jug@ber.ac.uk}\\
Aberystwyth University, SY23 3BZ, Wales, United Kingdom}
\date{}
\maketitle

\begin{abstract}
The concept of a controlled flow of a dynamical system, especially when the controlling process feeds information back about the system, is of central importance in control engineering. In this paper we build on the ideas presented by Bouten and van Handel (L. Bouten, R. van Handel, \lq\lq On the separation principle of quantum control", in \emph{Quantum Stochastics and Information: Statistics, Filtering and Control}, World Scientific, 2008) and develop a general theory of quantum feedback. We elucidate the relationship between the controlling processes $Z$ and the measured process $Y$, and to this end make a distinction between what we call the input picture and the output picture. We should that the input-output relations for the noise fields have additional terms not present in the standard theory, but that the relationship between the control processes and measured processes themselves are internally consistent - we do this for the two main cases of quadrature measurement and photon-counting measurement. The theory is general enough to include a modulating filter which processes the measurement readout $Y$ before returning to the system. This opens up the prospect of applying very general
engineering feedback control techniques to open quantum systems in a systematic manner, and we consider a number of specific modulating filter problems. Finally, we give a brief argument as to why most of the rules for making
instantaneous feedback connections (J. Gough, M.R. James, \lq\lq Quantum Feedback Networks: Hamiltonian Formulation", Commun. Math. Phys. \textbf{287}, 1109, 2009) ought to apply for controlled dynamical networks as well.
\end{abstract}

\section{Introduction}
Markovianity is one of the most frequently discussed and misunderstood ideas in physics. The essential problem is the concept of 
the state of a system, and this goes back to Hamilton who took Newton's second order differential equations of motion and converted them into first order equations for the mechanical state (the positions and momenta). The subsequent development of mechanics by
Hamilton, Jacobi, Liouville and Poisson showed that the collection of states (phase space) was more than just a blank canvas, but had its own canonical structure. At its heart, Hamiltonian mechanics is as far reaching as it is because it tells us how to define the
appropriate state and how to propagate it forward in time. Markov effectively extended this to stochastic systems: one again has a state space for the system, and a transition mechanism to tell us the probability to get from one state to another in a given time. 
The propagation of state for stochastic systems can frequently be presented as an explicit first order (stochastic)
differential equations, or dilation. 
Here one makes idealized assumptions on the noise (for instance, that it is Wiener) and builds up the model
from a mathematical basis trying to capture the real world problem as accurately as possible. The advantage has been that the
models are then tractable and allow for the formulation and solution of sophisticated control and optimization problems.

The programme of extending these ideas into the quantum world is an old one, with Dirac being one of the first to grasp the
canonical structure behind quantum mechanics and its structural relation to Hamiltonian mechanics. The systematic 
extension to open quantum systems came about in 1970's and 80's, with the key contribution being the development of quantum 
stochastic calculus by Hudson and Parthasarathy \cite{HP84}, see also \cite{Par92}, which enabled the explicit construction of dilations of completely positive quantum dynamical semi-groups. 

This is the second in as series of papers looking a generalizations of quantum feedback networks \cite{GouJam09a} where the Markov
property is either pushed to its limit, our replaced by physical models. In the first paper of the series \cite{NM_QFN_I},
we looked at models arising from quantizing transmission lines in a manner that was fundamentally non-Markov. We now turn
to models which use the Hudson-Parthasarathy theory, but now allow the coefficients of the quantum stochastic differential equations 
to be adapted quantum stochastic processes. More specifically, we want these coefficients to depend in a causal manner on
a commutative quantum stochastic process which we call the control process. In this way we obtain a controlled quantum flow when
we look at the stochastic Heisenberg dynamics. What we would like to do is to perform some continuous time measurement on the 
output noise from the system and take the control process to be some filtered process obtained from the measurement readout
process. This is related to the theory of quantum non-demolition measurements due to Belavkin \cite{Belavkin1}, however, our
interest is with the internal consistency of the feedback mechanism and do not need explicitly the quantum filtering (trajectories)
theory. In principle, the filter modulating the measurement readout process, before feeding back into the system, may be described
classically and one may worry about mixing classical and quantum dynamical systems. The treatment we give however circumvents this
issue.

It is not uncommon for theoretical physicists to refer to a master equation with time-dependent coupling operators as non-Markovian.
As such, our model ought to be considered as radically non-Markovian as we not only allow the coupling operators to be time-dependent,
we allow them to depend of the past history of the measurement readout. However, the theory we present is firmly set in the 
framework of the Hudson-Parthasarathy theory and makes full use of calculus used to build Markov models and retains all the
features of being able to propagate the state forward in time, albeit with a much more complicated feedback-modulated dynamical flow.
The preference  of the author would be to describe the models here as pushing the quantum Markov description about as far as possible.

\begin{figure}[h]
	\centering
		\includegraphics[width=0.50\textwidth]{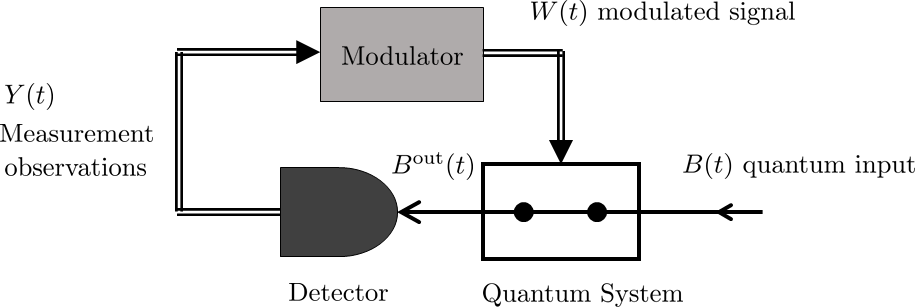}
	\caption{The output field of a quantum system is measured and the measurement observations $Y$ processed by a modulator.
	The $SLH$-coefficients may then be considered as functions of the modulated process $W$.}
	\label{fig:NM_QFN_II_Fb}
\end{figure}

The use of measurement feedback in quantum optics was first proposed by Wiseman \cite{Wiseman}. This is an early form of what has
now become classified as environment engineering. Subsequently, models were developed for realistic photo-detectors \cite{WWM}
which took account of the modification of the measurement output that would likely occur in the detection - so that the real detector was effectively an ideal detector cascaded with a filter. In practice, one would want to design a filter to process the measurement
readout before feeding back to the system. Here one needs the concept of a controlled dynamics, and in the context of open quantum 
systems this was first formalized by Bouten and van Handel \cite{BvH,BvH_ref}, and also \cite{BouvanHJam07}. This paper builds on their framework.

This situation is depicted in Figure \ref{fig:NM_QFN_II_Fb}. We have a quantum system $%
G\sim \left( S,L,H\right) $ with input process $B\left( t\right) $. The
output field, $B^{\text{out}}\left( t\right) $, is passed to a detector
which measures, say, the quadrature $Y\left( t\right) =B^{\text{out}}\left(
t\right) +B^{\text{out}\ast }\left( t\right) $. For an excellent recent account of what is now referred to as the $SLH$-theory
see the review by Combes, Kerckhoff and Sarovar \cite{CKS}. The measured output is
essentially classical and may be treated as a classical stochastic process -
it may then be passed through a modulator which, for instance, may smooth
the measured output $Y(t)$. The modulated process $W\left( t\right) $ will
be a process that depends causally on $Y\left( t\right) $ - for stochastic
processes, $W$ will be adapted to the filtration generated by $Y$. The $SLH$%
-coefficients are then made to depend on $W\left( t\right) $.

The treatment here relies on making appropriately engineered interconnections between the system, the detector and the modulator.
In this respect, it is closer to coherent quantum feedback models \cite{YK}-\cite{Yamamoto14} than measurement feedback 
\cite{WM_book}, however we show the consistency of these approaches.

In section \ref{sec:ConDyn}, we give an abstract account of controlled quantum flows where we assume that the coefficients of
the quantum stochastic differential equation - the $SLH$-coefficients - are allowed to be adapted processes depending on some control
process $Z$, itself an adapted commutative quantum stochastic process on the noise space. We pay particular attention to the form of
the input-output relations for the noise field as this is quite different from the usual (autonomous Markov!) case where the 
coefficients are fixed operators on the system space. Our main contribution here is to elucidate the difference between the 
\textit{input picture} and the \textit{output picture}: in particular, the measured output $Y$ should be control process $Z$
rotated from input to output picture using the unitary quantum stochastic process. In section \ref{sec:Quad}, we focus
on the special case of quadrature measurement, and in this case look at several explicit models where the measurement readout $Y$
is modulated before feedback to the system.

\subsection{Quantum Stochastic Evolutions}

The idea of stochastic dilations originates in the observation that heat equations - (hypo)-elliptic 2nd order
\textit{partial} differential equations - may be treated by instead considering 1st order \textit{stochastic}
differential equations. In a non-commutative setting, say a C*-algebra, the role of 1st differential operators (derivations) 
is played by maps of the form $ [ \cdot ,  K ]$ for a fixed element $K$. Hudson-Parthasarathy developed
a theory of quantum stochastic evolutions which gave explicit dilations of non-commutative heat equations - GKS-Lindblad 
master equations. The setting of their theory is a Hilbert space of the form 
\begin{eqnarray}
\mathfrak{H}=\mathfrak{h}_{0}\otimes \Gamma \left( \mathfrak{K}\otimes
L^{2}[0,\infty )\right)
\end{eqnarray}
where $\mathfrak{h}_{0}$ is a fixed Hilbert space, called the \emph{initial
space}, and $\mathfrak{K}$ is a fixed Hilbert space called the \emph{%
internal space} or \textit{multiplicity space}. Here we want a finite number, $n$, of boson inputs labeled by the set $\mathsf{k}$, and to this end, $\mathfrak{K}=\mathbb{C}^{\mathsf{k}}$ and $\mathsf{k}=\left\{ 1,\cdots ,n\right\} $.
The second quantization functor is here denoted as $\Gamma (\cdot )$ and the theory makes extensive use of the tensor product
decomposition $ \Gamma (\mathfrak{k}_1 \oplus \mathfrak{k}_2 ) \cong \Gamma (\mathfrak{k}_1   )
\otimes \Gamma (  \mathfrak{k}_2 )$, for arbitrary Hilbert spaces $\mathfrak{k}_1 $ and $ \mathfrak{k}_2 $.

Take $\left\{ e_{k}:k\in \mathsf{k}\right\} $ be the canonical orthonormal basis for $\mathfrak{K}$, then the $k$th
annihilation process is defined to be
\begin{eqnarray*}
B_{k}\left( t\right) \triangleq a\left( e_{k}\otimes 1_{\left[ 0,t\right]
}\right)
\end{eqnarray*}
where $a\left( \cdot \right) $ is the annihilation functor from $\mathfrak{K}%
\otimes L^{2}[0,\infty )$ to the Fock space $\Gamma \left( \mathfrak{K}%
\otimes L^{2}[0,\infty )\right) $ and $1_{\left[ 0,t\right] }$ is the
indicator function for the interval $\left[ 0,t\right] $. We denote its adjoint, the $k$th creation process as
$B_{k}\left( t\right) ^{\ast }$. The scattering process from the $k$th to the $j$th field is defined to be 
\begin{eqnarray*}
\Lambda _{jk}\left( t\right) \triangleq d\Gamma \left( |e_{j}\rangle \langle
e_k|\otimes \pi _{\left[ 0,t\right] }\right)
\end{eqnarray*}
where $d\Gamma \left( \cdot \right) $ is the differential second
quantization functor and $\pi _{\left[ 0,t\right] }$ is the operator of
pointwise multiplication by $1_{\left[ 0,t\right] }$ on $L^{2}[0,\infty )$.

The tensor product decomposition then implies the \textit{continuous tensor product decomposition}
\begin{eqnarray}
\mathfrak{H} \cong \mathfrak{H}_{\left[ 0,t\right] }\otimes \mathfrak{H}_{\left( t,\infty \right) },
\label{eq:ctpd}
\end{eqnarray}
for each $t>0$,
where $\mathfrak{H}_{\left[ 0,t\right] }=\mathfrak{h}_{0}\otimes \Gamma
\left( \mathfrak{K}\otimes L^{2}[0,t)\right) $ and $\mathfrak{H}_{\left(
t,\infty \right) }=\Gamma \left( \mathfrak{K}\otimes L^{2}(t,\infty )\right) 
$. We shall write $\mathfrak{A}_{t]}$ for the space of bounded operators on $%
\mathfrak{H}$ that act trivially on the future component $\mathfrak{H}%
_{\left( t,\infty \right) }$, that is
\begin{eqnarray}
\mathfrak{A}_{t]} \equiv \mathscr{B} ( \mathfrak{H}_{\left[ 0,t\right] }) ,
\label{eq:A_t]}
\end{eqnarray}

Following Hudson and Parthasarathy, we refer to a family $X_{t}=\left\{ X_{t}:t\geq 0\right\}$
of operators on $\mathfrak{H}$ as a quantum stochastic process, and we say that the process is \emph{adapted} if $X_{t}\in %
\mathfrak{A}_{t]},$ for each $t\geq 0$. They then proceed to define quantum stochastic integrals, in the sense of It\={o},
of adapted processes with respect to the creation, annihilation, scattering processes.
Taking $\left\{ x_{\alpha \beta }\left( t\right) :t\geq 0\right\} $ to be a
family of adapted quantum stochastic processes, their quantum stochastic
integral is $X_{t}=\int_{0}^{t}x_{\alpha \beta }\left( s\right) dB^{\alpha
\beta }\left( t\right) $ which is shorthand for 
\begin{eqnarray*}
\int_{0}^{t}x_{00}\left( s\right) ds+\sum_{j\in \mathsf{k}}\int_{0}^{t} x_{j0}\left(
s\right) dB_{j}\left( s\right) ^{\ast } 
+\sum_{k\in \mathsf{k}} \int_{0}^{t} x_{0k}\left(
s\right) dB_{k}\left( s\right) +\sum_{j,k\in \mathsf{k}} \int_{0}^{t}
x_{jk}\left( s\right) d\Lambda _{jk}\left( s\right) .
\end{eqnarray*}
The differentials are understood in the It\={o} sense: for each $s$ the coefficient is an operator in the past algebra
$\mathfrak{A}_{s]}$ while the increments are future pointing and act non-trivially on $\mathfrak{H}_{(s, \infty )}$.
Given a second quantum It\={o} integral $Y_{t}$, with $dY_{t}=y_{\alpha \beta
}\left( t\right) dB^{\alpha \beta }\left( t\right) $, we have the \textit{quantum It\={o} product rule}
\begin{eqnarray}
d\left( X_{t}.Y_{t}\right) =dX_{t}.Y_{t}+X_{t}.dY_{t}+dX_{t}.dY_{t},
\label{Ito formula}
\end{eqnarray}
with the It\={o} correction given by 
\begin{eqnarray}
dX_{t}.dY_{t}&=& \sum_{k \in \mathsf{k}} \bigg\{ x_{0 k}\left( t\right) y_{k0 }\left( t\right) \,dt
+\sum_{j\in \mathsf{k}} x_{j k}\left( t\right) y_{k0 }\left( t\right) \,dB_j^\ast \left( t\right)
+\sum_{l\in \mathsf{k} } x_{0 k}\left( t\right) y_{kl }\left( t\right) \,dB_l \left( t\right)
+\sum_{j,l \in \mathsf{k} }x_{j k}\left( t\right) y_{kl }\left( t\right) \,d\Lambda_{jl}^\ast \left( t\right) \bigg\}
\nonumber \\
&=& x_{\alpha k}\left( t\right) y_{k\beta }\left( t\right) \,dB^{\alpha \beta }\left( t\right) .  
\label{Ito correction}
\end{eqnarray}
The quantum It\={o} table of gives the non-vanishing products of increments
\begin{eqnarray*}
d\Lambda _{ij}d\Lambda _{kl} &=&\delta _{jk}d\Lambda _{il},\qquad d\Lambda
_{ij}dB_{k}^{\ast }=\delta _{jk}dB_{i}^{\ast } \\
dB_{i}d\Lambda _{kl} &=&\delta _{ik}dB_{l},\qquad dB_{i}dB_{k}^{\ast
}=\delta _{ij}dt.
\end{eqnarray*}

For a product of $n$ quantum stochastic integrals, we get a sum of $2^{n}-1$
terms. For instance, triple products have the formula
\begin{eqnarray}
d\left( XYZ\right)  &=&\left( dX\right) YZ+X\left( dY\right) Z+XY\left(
dZ\right) \nonumber  \\
&&+\left( dX\right) \left( dY\right) Z+\left( dX\right) Y\left( dZ\right)
+X\left( dY\right) \left( dZ\right)  \nonumber\\
&&+\left( dX\right) \left( dY\right) \left( dZ\right) .
\label{eq:triple}
\end{eqnarray}

The general form of the constant operator-coefficient quantum stochastic
differential equation for an adapted unitary process $U$ is 
\begin{eqnarray}
dU\left( t\right) = \bigg\{ -\left( \frac{1}{2}L_{\mathsf{k}}^{\ast }L_{%
\mathsf{k}}+iH\right) dt+\sum_{j\in \mathsf{k}}L_{j}dB_{j}\left( t\right)
^{\ast }  -\sum_{j,k\in \mathsf{k}} L_j^\ast S_{jk} dB_{k}\left( t\right) +\sum_{j,k\in 
\mathsf{k}}(S_{jk}-\delta _{jk})d\Lambda _{jk}\left( t\right) \bigg\} %
U\left( t\right)
\label{eq:Ito_QSDE}
\end{eqnarray}
where the $S_{jk},L_{j}$ and $H$ are operators on the initial Hilbert space
which we collect together as
\begin{eqnarray}
S_{\mathsf{kk}}=\left[ 
\begin{array}{ccc}
S_{11} & \cdots & S_{1n} \\ 
\vdots & \ddots & \vdots \\ 
S_{n1} & \cdots & S_{nn}
\end{array}
\right] , \quad L_{\mathsf{k}}=\left[ 
\begin{array}{c}
L_{1} \\ 
\vdots \\ 
L_{n}
\end{array}
\right] .
\end{eqnarray}
The necessary and sufficient conditions for the process $U$ to be unitary are that $S_{\mathsf{kk}}=\left[ S_{jk}\right] _{j,k\in \mathsf{k}}$ is unitary (i.e., $\sum_k S_{kj}^\ast S_{kl} = \sum_k S_{jk}S_{lk}^\ast = \delta_{jl} \, I$)
and $H$ self-adjoint. (We use the convention that $L_{\mathsf{k}}=\left[
L_{k}\right] _{k\in \mathsf{k}}$ and that \ $L_{\mathsf{k}}^{\ast }L_{%
\mathsf{k}}=\sum_{k\in \mathsf{k}}L_{k}^{\ast }L_{k}$.) 
The triple $\left( S,L,H\right) $ are termed the \textit{Hudson-Parthasarathy parameters} of the open system evolution,
or more colloquially the $SLH$-\textit{coefficients}.

We may write the equation (\ref{eq:Ito_QSDE}) as $dU\left( t\right) =dG\left( t\right)
\,U\left( t\right) $ and we note that
\begin{eqnarray*}
dG\left( t\right) =
-\left( \frac{1}{2}L_{\mathsf{k}}^{\ast }L_{%
\mathsf{k}}+iH\right) dt+\sum_{j\in \mathsf{k}}L_{j}dB_{j}\left( t\right)
^{\ast }  -\sum_{j,k\in \mathsf{k}}S_{jk}L_{k}dB_{k}\left( t\right) +\sum_{j,k\in 
\mathsf{k}}(S_{jk}-\delta _{jk})d\Lambda _{jk}\left( t\right) .
\label{eq:dG}
\end{eqnarray*}
The isometry condition is $U^{\ast }\left( t\right) U\left( t\right) =I$
which in differential form is $\left( dU^{\ast }\right) U+U^{\ast }\left(
dU\right) +\left( dU^{\ast }\right) \left( dU\right) =0$, or equivalently
\begin{eqnarray*}
dG+dG^{\ast }+\left( dG^{\ast }\right) \left( dG\right) =0.
\label{eq:isometry}
\end{eqnarray*}
Likewise the co-isometry condition is $dG+dG^{\ast }+\left( dG\right) \left(
dG^{\ast }\right) =0$.

Let $X$ be an operator on the initial space, then we set $j_{t}\left(
X\right) \triangleq U\left( t\right) ^{\ast }\left[ X\otimes I\right]
U\left( t\right) $ to give its Heisenberg evolution. We refer to $\left\{
j_{t}\left( \cdot \right) :t\geq 0\right\} $ as the \textit{quantum stochastic flow}.
From the quantum It\={o} calculus, we have
\begin{eqnarray}
dj_{t}\left( X\right) =j_{t}\left( \mathscr{L}X\right)
dt+\sum_{i}j_{t}\left( \mathscr{M}_{i}X\right) dB_{i}^{\ast }\left( t\right)
+\sum_{i}j_{t}\left( \mathscr{N}_{i}X\right) dB_{i}\left( t\right)
+\sum_{j,k}j_{t}\left( \mathscr{S}_{jk}X\right) d\Lambda _{jk}\left(
t\right) ;
\label{eq:heisenberg}
\end{eqnarray}
where 
\begin{eqnarray}
\mathscr{L}X &=&\frac{1}{2}\sum_{i}L_{i}^{\ast }\left[ X,L_{i}\right] +\frac{%
1}{2}\sum_{i}\left[ L_{i}^{\ast },X\right] L_{i}-i\left[ X,H\right] , \quad 
\text{(the Lindbladian!)},\nonumber \\
\mathscr{M}_i X &=& \sum_j S_{ji}^\ast [X,L_j ], \nonumber \\
\mathscr{N}_i X &= & \sum_k [L_k^\ast , X ] S_{ki},\nonumber \\
\mathscr{S}_{ik} X &=& \sum_j S_{ji}^\ast XS_{jk} - \delta_{ik} X .
\label{eq:EH}
\end{eqnarray}

The equation (\ref{eq:heisenberg}) gives the dynamical evolution of the observable $j_t (X)$ and effectively
gives the dynamics of the system driven by the external inputs. We may also include outputs by setting
\begin{eqnarray}
B^{\text{out}}_k (t) \triangleq U ( t)^\ast \left[ I\otimes B_k (t) \right] U\left( t\right) .
\label{eq:output}
\end{eqnarray}
We use (\ref{eq:triple}) to compute the differential of $B^{\text{out}}_k (t)$.

From the quantum It\={o} table we have that two of the terms vanish straight
away: these are $\left( dU^{\ast }\right) \left( dB_{j}\right) U$ and $%
\left( dU^{\ast }\right) \left( dB_{j}\right) \left( dU\right) $, leaving 
\begin{eqnarray*}
dB_{j}^{\text{out}}=U^{\ast }\left( dB_{j}\right) U+\bigg\{ \left( dU^{\ast
}\right) B_{j}U+U^{\ast }B_{j}\left( dU\right) +\left( dU^{\ast }\right)
B_{j}\left( dU\right) \bigg\} +U^{\ast }\left( dB_{j}\right) \left(
dU\right) .
\label{dB_out}
\end{eqnarray*}
The first term here is readily seen to be $U^{\ast }\left( dB_{j}\right)
U\equiv dB_{j}$ since the increment is future pointing and commutes with
adapted coefficients. The next three terms can be rewritten as 
\begin{eqnarray*}
\bigg\{ \left( dU^{\ast }\right) B_{j}U+U^{\ast }B_{j}\left( dU\right)
+\left( dU^{\ast }\right) B_{j}\left( dU\right) \bigg\}  &=&U^{\ast
}\bigg\{ \left( dG^{\ast }\right) B_{j}+B_{j}\left( dG\right) +\left(
dG^{\ast }\right) B_{j}\left( dG\right) \bigg\} U \\
&=&U^{\ast }\bigg\{ \left( dG^{\ast }\right) +\left( dG\right) +\left(
dG^{\ast }\right) \left( dG\right) \bigg\} B_{j}U
\end{eqnarray*}
where we use the fact that $B_{j}\left( t\right) $ acts trivially on the
initial space and the future (beyond time $t$) factor of the Fock space
while the increments $dG$ and $dG^{\ast }$ in (\ref{eq:dG}) act non-trivially on these
factors, so that $dG\left( t\right) $ and $dG\left( t\right) ^\ast $
commute with $B_{j}\left( t\right) $ for each $t$. By the isometry condition
(\ref{eq:isometry}), these three terms vanish identically. The remaining term is
\begin{eqnarray*}
U^{\ast }\left( dB_{j}\right) \left( dU\right) =U^{\ast }\left(
dB_{j}\right) \left( dG\right) U=U^{\ast }\left( \sum_{k}\left(
S_{jk}-\delta _{jk}\right) dB_{jk}\left( t\right) +L_{k}dt\right)
U=\sum_{k}j_{t}\left( S_{jk}-\delta _{jk}\right) dB_{jk}\left( t\right)
+j_{t}\left( L_{k}\right) dt
\end{eqnarray*}
and so we obtain
\begin{eqnarray}
dB_{j}^{\text{out}}\left( t\right) =\sum_{k}j_{t}\left( S_{jk}\right)
\,dB_{k}\left( t\right) +j_{t}\left( L_j\right) \,dt .
\end{eqnarray}
The output field may then be measured. For instance, in a homodyne measurement we may
measure the quadrature process $Y_k = B_k^{\text{out}} +B^{\text{out} \ast}_k$. Note that in this case,
\begin{eqnarray}
dY_k (t) = \sum_{k}j_{t}\left( S_{jk}\right) \,dB_{k}\left( t\right) +\sum_{k}j_{t}\left( S_{jk}^\ast \right) \,dB_{k}^\ast\left( t\right) 
+j_{t}\left( L_j+ L_j^\ast \right) \,dt .
\label{eq:standard_quadrature_output}
\end{eqnarray}

\subsection{Controlled flows}
In principle, there is nothing to stop us replacing the $SLH$ coefficients in (\ref{eq:Ito_QSDE})
with adapted stochastic processes $\big\{ S_{jk}(t) ,L_{j}(t), H(t) : t\ge 0 \big\}$. Mathematically, the conditions that
$[ S_{jk} (t) ]_{j,k\in \mathsf{k}}$ be unitary and that $H(t)$ be self-adjoint, for all $t\ge 0$, are enough to ensure 
the unitarity of the corresponding process $\{ U(t) : t \ge 0 \}$.

To see why we might want to consider such models, let us look at the situation where the $S$ and $L$ are fixed operators 
on the initial space, and where the Hamiltonian is allowed to depend on a time-varying process $Z=\{ Z_t : t \geq 0\}$, say
\begin{eqnarray}
H(t) \equiv h (Z_t ) \, F ,
\end{eqnarray}
where $F$ is a fixed self-adjoint operator on the initial space, and $h (\cdot )$ is some real-valued function.

The corresponding unitary process may be denoted as $U_t^{[Z]}$ and may be referred to as a \textit{controlled unitary},
specifically controlled by the process $Z$. By construction, the controlled unitary depends on the control $Z$ in a causal
manner: $U_t^{[Z]}$ depends on the $\{ Z_s : 0 \le s \le t \}$ and not on values of $Z$ later than time $t$. We may take $Z$ 
to be a deterministic control function, however, more generally we could take it to be itself a quantum stochastic process, see Figure \ref{fig:NM_QFN_II_Fb}.
For instance, we may imagine performing a continuous measurement on the output process, and feed the measured output $Y$
back in as the control.

In \cite{Wiseman}, Wiseman considered direct feedback models where a formal Hamiltonian
correction $H_{\text{fb}}\left( t\right) =F\otimes \dot{Y}\left( t\right) $
was included. Here $\dot{Y}\left( t\right) $ is the formal derivative of the
measurement output process - as $Y$ is a diffusion process, and consequently
of unbounded variation, $\dot{Y}$ needs to be interpreted with care. A
rigorous way of interpreting Wiseman's Hamiltonian was present in \cite{GouJam09b} and
involves a double pass through the system - first corresponding to $SLH$%
-coefficients $\left( I,L,H\right) $ and second corresponding to $\left(
I,-iF,0\right) $. This can be viewed before feedback as the two-input
two-output device with $SLH$-coefficients $\left( \left[ 
\begin{array}{cc}
I & 0 \\ 
0 & I
\end{array}
\right] ,\left[ 
\begin{array}{c}
L \\ 
-iF
\end{array}
\right] ,H\right) $, that is,
\begin{eqnarray*}
dU\left( t\right) =\left\{ LdB_{1}\left( t\right) ^{\ast }-L^{\ast
}dB_{1}\left( t\right) -iF\left[ dB_{2}\left( t\right) +dB_{2}\left(
t\right) ^{\ast }\right] -\left( \frac{1}{2}L^{\ast }L+F^{2}+iH\right)
dt\right\} U\left( t\right) .
\end{eqnarray*}

\begin{figure}[h]
\centering
	\includegraphics[width=0.250\textwidth]{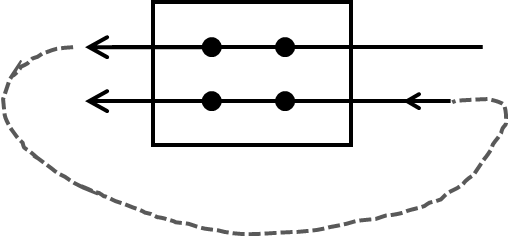}
	\caption{A double-pass (Series Product configuration) leading to Wiseman's direct feedback.}
	\label{fig:NM_QFN_II_Series}
\end{figure}

The first output is then fed in as the second input as shown in Figure \ref{fig:NM_QFN_II_Series}
to create a feedback loop. The term $F\left[ dB_{2}\left( t\right)
+dB_{2}\left( t\right) ^{\ast }\right] $ is then interpreted as realizing
the formal expression $H_{\text{fb}}\left( t\right) dt$. The closed loop
system then has $SLH$-coefficients given by the series product
\begin{eqnarray*}
\left( I,-iF,0\right) \vartriangleleft \left( I,L,H\right) =\left( I,L-iF,H+%
\frac{1}{2}\left( FL+L^{\ast }F\right) \right) 
\end{eqnarray*}
that is 
\begin{eqnarray*}
dU_{\text{fb}}\left( t\right) =\left\{ (L-iF)dB\left( t\right) ^{\ast
}-(L-iF)^{\ast }dB\left( t\right) -\left( \frac{1}{2}(L-iF)^{\ast }(L-iF)+iH+%
\frac{i}{2}\left( FL+L^{\ast }F\right) \right) dt\right\} U_{\text{fb}%
}\left( t\right) .
\end{eqnarray*}
The reduced model obtained this way agrees with the one derived by Wiseman.

The mathematical principle for dealing with this is quite subtle and is due to Bouten and van Handel \cite{BvH,BvH_ref}.
Their central observation was that one had to take care to distinguish the control process, $Z=B+B^\ast$, 
which is \textit{fed in} as a control
modifying the Hamiltonian $H$ - more generally, the $S$ and $L$ as well - and the output process $Y=B^{\text{out}}
+B^{\text{out} \ast}$ which is the measured output that you want to somehow feed back in. The two are related by
\begin{eqnarray}
Y(t) = U(t)^\ast  Z(t) U(t) .
\label{eq:Y_Z}
\end{eqnarray}
The subtlety is that $U$ is supposed now to be the modified $U_t^{[Z]}$ controlled by $Z$. In Wiseman's feedback model,
the modulator takes $Y$ and converts it into $W=\dot Y$ - it is described as proportional feedback, but in these terms
it is arguably more derivative controller than a proportional controller \cite{argument}.

This brings us to the main question which we aim to resolve in this paper: when to use $Z$ and when to use $Y$? In Wiseman's
derivation \cite{Wiseman}, the feedback Hamiltonian is $H_{\text{fb}}\left( t\right) =F\otimes \dot{Y}\left( t\right) $, 
however, in the Bouten and van Handel papers \cite{BvH,BvH_ref}, one constructs a controlled flow with $Z$ as the dependent process. In the derivation of Gough and James \cite{GouJam09b}, there is no explicit measurement - instead there is a second pass which has the coupling operator $-iF$ with $F=F^\ast$ and this some how picks out the quadrature $Z_2=B_2+B_2^\ast$, but after the series product is used to make the feedback connection everything is evolving coherently. As we shall see, there is an input picture and an output picture, and the two are unitarily equivalent. For model building, the input picture is preferable and it is here that we can apply the various interconnection rules for quantum feedback networks \cite{GouJam09a}. However, the controlled quantum flow can 
equally well described in the output picture, using $Y$, and the associated dynamical equations here are arguably more physically
intuitive.

\section{The Input and Output Pictures for Controlled Open Dynamics}
\label{sec:ConDyn}
Let us recall the various algebras we are considering in this theory
\begin{eqnarray*}
\mathfrak{A}_{0} &=&\mathscr{B}\left( \mathfrak{h}_{0}\right) \text{ \ the initial algebra
(system operators at time }t=0\text{)} \\
\mathfrak{N} &=&\mathscr{B}\left( \Gamma \left( \mathfrak{K}\otimes L^{2}[0,\infty )\right)
\right) \text{, \ the noise space.}
\end{eqnarray*}
We note that the continuous tensor product decomposition (\ref{eq:ctpd}) implies that $\mathfrak{N}=\mathfrak{N}_{[0,t]}\otimes \mathfrak{N}_{\left( t,\infty \right) }$, where 
$\mathfrak{N}_{\left[ 0,t\right] }=\mathscr{B}\left( \Gamma \left( \mathfrak{K}\otimes
L^{2}[0,t]\right) \right) $, etc. The algebra of all operators up to time $t$%
, (\ref{eq:A_t]}), is then 
\begin{eqnarray*}
\mathfrak{A}_{t]}=\mathfrak{A}_{0}\otimes \mathfrak{N}_{\left[ 0,t\right] }.
\end{eqnarray*}

\begin{definition}
A control process is a commutative family $Z=\left\{ Z_{t}:t\geq 0\right\} $
of adapted processes acting trivially on the initial space. The filtration determined by a
controlled process $Z$ is the collection of commutative von Neumann algebras
\begin{eqnarray*}
\mathfrak{Z}_{[0,t]}=\left\{ Z_{s}:0\leq s\leq t\right\} \subset \mathfrak{N}%
_{[0,t]},
\end{eqnarray*}
for $t >0$. 
\end{definition}

\begin{definition} A controlled stochastic process $F\left[ \left[ Z\right] \right] =\left\{ F_{t}\left[ %
\left[ Z\right] \right] :t\geq 0\right\} $, controlled by control process $Z$ is a quantum stochastic process
such that $F\left[ \left[ Z\right] \right]$ is affiliated with $\mathfrak{A}_0 \otimes \mathfrak{Z}_{[0,t]}$.
\end{definition}

\begin{definition}
For a fixed control process $Z$, let $\left( S\left[ \left[ Z\right] \right]
,L\left[ \left[ Z\right] \right] ,H\left[ \left[ Z\right] \right] \right) $
be controlled processes with $S_ t\left[ \left[ Z\right] \right] $ and $H_{t} [[ Z] ] $ unitary and self-adjoint, respectively, for
each $t$.  The unitary process they generate, $U_t^{[Z]} $, is the solution of the quantum stochastic differential
equation
$dU_t^{[Z]} =dG_t^{[Z]} \,  U_t^{[Z]} $ where
\begin{eqnarray}
dG_t^{[Z]} &=&
-\left( \frac{1}{2}L_{\mathsf{k},t}[[Z]]^{\ast }L_{\mathsf{k},t}[[Z]]+iH_t [[Z]] \right) dt
+\sum_{j\in \mathsf{k}}L_{j,t}[[Z]]dB_{j}\left( t\right)^{\ast }\nonumber \\
&&  -\sum_{j,k\in \mathsf{k}}L_{j,t} [[Z]]^\ast S_{jk,t}[[Z]]dB_{k}\left( t\right) +\sum_{j,k\in 
\mathsf{k}}(S_{jk,t}[[Z]]-\delta _{jk})d\Lambda _{jk}\left( t\right) ,
\label{eq:dGZ}
\end{eqnarray}
with $U(0)=I$.
\end{definition}

In our definition of controlled flows, we required
that $S_{jk,t}\left[ \left[ Z\right] \right] ,L_{k,t}\left[ \left[ Z\right] %
\right] $ and $H_{t}\left[ \left[ Z\right] \right] $ are controlled
processes $F_{t}\left[ \left[ Z\right] \right] $ - that is, they are
affiliated to $\mathfrak{A}_{0}\otimes \mathfrak{Z}_{[0,t]}$ which is a proper
subset of $\mathfrak{A}_{t]}$. The unitary process $U_{t}^{\left[ Z\right] }$
they generate however is not a controlled process. It is a quantum
stochastic integral whose coefficients are controlled process but the nature
of the integration will typically lead to a process in $\mathfrak{A} _{0}\otimes \mathfrak{N}_{[0,t]}$,
and outside $\mathfrak{A} _{0}\otimes \mathfrak{Z}_{[0,t]}$ in particular.
The same applies to $G_t^{[Z]}$ defined by (\ref{eq:dGZ}) with initial condition $G_0^{[Z]}=0$.

The identities $dG_t^{[Z] \, \ast} +dG_t^{[Z] \, \ast}+\left( dG_t^{[Z]   }\right) \left( dG_t^{[Z] \, \ast}\right) =0=
dG_t^{[Z] \, \ast}+dG_t^{[Z] \, \ast}+\left( dG_t^{[Z] \, \ast}\right) \left(dG_t^{[Z] }\right) $ still hold, and the unitary of $U_t^{[Z]}$ follows from this.

Again, let $X$ be an operator on the initial space, then we now set $j_{t}^{[Z]} \left(
X\right) \triangleq U_t^{[Z] \, \ast} \left[ X\otimes I\right]
U_t^{[Z]} $ to give its Heisenberg evolution. We refer to $\left\{
j_{t}^{[Z]}\left( \cdot \right) :t\geq 0\right\} $ as a \textit{controlled quantum stochastic flow}.
From the quantum It\={o} calculus, we have
\begin{eqnarray}
dj_{t}^{[Z]}\left( X\right) = U_t^{[Z] \, \ast } \bigg( \mathscr{L}_t^{[[Z]]}X \bigg) U_t^{[Z]} \,
dt+\sum_{i} U_t^{[Z]\, \ast } \bigg( \mathscr{M}_{i,t}^{[[Z]]}X \bigg) U_t^{[Z]} \, dB_{i}^{\ast }\left( t\right) \nonumber \\
+\sum_{i} U_t^{[Z] \, \ast} \bigg(\mathscr{N}_{i,t}^{[[Z]]}X \bigg) U_t ^{[Z]} \, dB_{i}\left( t\right)
+\sum_{j,k}U_t^{[Z] \, \ast } \bigg( \mathscr{S}_{jk,t}^{[[Z]]}X \bigg) U_t ^{[Z]} \, d\Lambda _{jk}\left(
t\right) ;
\label{eq:heisenberg_Z}
\end{eqnarray}
where the new super-operators $\mathscr{L}_t^{[[Z]]}, \mathscr{M}_{i,t}^{[[Z]]}, \mathscr{N}^{[[Z]]}_{i,t}
, \mathscr{S}^{[[Z]]}_{jk,t}$
are just the same for as in (\ref{eq:EH}) with the $SLH$-coefficients now replaced by the controlled versions
$S_t [[Z]], L_t [[Z]] , H_t [[Z]]$.

Likewise, the output processes are now $B^{\text{out},[Z]}_k (t) \triangleq U_t^{[Z]\, \ast } \left[ I\otimes B_k (t) \right] 
U_t^{[Z]}$. We may again use the quantum It\={o} calculus as before to derive the analogue of (\ref{dB_out}).
We find that the first term should be 
$U_t ^{[Z]\, \ast }\,dB_{j} (t) \, U_t^{[Z]}$ which again equals $dB_{j}(t)$. However, the argument we previously used
to show that the next group of three terms vanishes breaks down since $B_j (t)$ no longer commutes with $dG_t^{[Z]}$
due to its possible dependence on $Z_s$ for $0\le s \le t$.

Instead we obtain
\begin{eqnarray}
dB_j^{\text{out}}( t) &=&\sum_k U_t^{[Z] \, \ast}\big( S_{jk,t}[[Z]] \big) U_t^{[Z]}
\,dB_k ( t) + U_t^{[Z] \, \ast} \big( L_j \big)U^{[Z]}_t \, dt  \nonumber \\
&& + U_t^{[Z] \, \ast} \bigg\{ \left( dG_t^{[Z] \,\ast }\right) B_{j}(t) +B_{j}(t) \left( dG_t^{[Z]}\right) +\left(
dG^{[Z] \ast }_t  \right) B_{j}(t) \left( dG_t^{[Z]}\right) \bigg\}U^{[Z]}_t.
\label{eq:B_out_Z}
\end{eqnarray}
The term in braces in (\ref{eq:B_out_Z}) may be written as
\begin{eqnarray}
&&\Bigg\{
\frac{1}{2}\sum_i L_{i,t}[[Z]]^{\ast }\bigg[ B_j (t),L_{i,t}[[Z]]\bigg] +\frac{%
1}{2}\sum_{i}\bigg[ L_{i,t}[[Z]]^{\ast },B_j (t)\bigg] L_{i,t}[[Z]]-i\bigg[ B_j (t),H_t [[Z]]\bigg] \Bigg\} dt \nonumber \\
&&+\sum_{i,k}S_{ki,t}[[Z]]^\ast \bigg[ B_j (t) ,L_{k,t}[[Z]] \bigg] \,  dB_{i}^{\ast }\left( t\right) \nonumber \\
&&+\sum_{i,k} \bigg[ L_{k,t} [[Z]]^\ast , B_j (t) \bigg] S_{ki,t} [[Z]] \, dB_{i}\left( t\right) \nonumber \\
&& +\sum_{i,k}  \bigg( \sum_l S_{li,t}[[Z]]^\ast B_j (t) S_{lk,t} [[Z]] - \delta_{ik} B_j (t)\bigg) \,  d\Lambda _{ik}\left(
t\right).
\label{eq:braces}
\end{eqnarray}
Previously this vanishes since  the $B_j (t)$ fields lived
in the noise algebra while the $SLH$ coefficients lived in the initial algebra - this time the $SLH$ coefficients are themselves adapted processes.

\subsection{The Control and the Measurement Algebra}

Given an adapted unitary process $U\left( \cdot \right) $ we obtain the
measurement algebras
\begin{eqnarray*}
\mathfrak{Y}_{[0,t]}=U\left( t\right) ^{\ast }\mathfrak{Z}_{[0,t]}U\left( t\right)
=\left\{ Y_{s}:0\leq s\leq t\right\} 
\end{eqnarray*}
where $Y_{t}=U\left( t\right) ^{\ast }Z_{t}U\left( t\right) $. (Note that $%
Y_{s}\equiv U\left( t\right) ^{\ast }Z_{s}U\left( t\right) $ whenever $t\geq
s$.) The algebras $\mathfrak{Y}_{[0,t]}$ are again commutative, and form a
filtration of von Neumann algebras.

\bigskip 

\begin{observation} We have introduced controlled processes $F_{t}\left[ %
\left[ Z\right] \right] $ which, for each fixed $t$, is an operator taking
values in $\mathfrak{A}_{0}\otimes \mathfrak{Z}_{[0,t]}$ with commutative domain $%
\mathfrak{Z}_{[0,t]}$. A central feature is that the domain space commutes with
the value space. Now let $U\left( \cdot \right) $ be an adapted unitary
process and define
\begin{eqnarray*}
\tilde{F}_{t}\left[ \left[ Y\right] \right] =U\left( t\right) ^{\ast }\,F_{t}%
\left[ \left[ Z\right] \right] \,U\left( t\right) .
\end{eqnarray*}
Let us set $\mathfrak{A}_{t}=U\left( t\right) ^{\ast }\mathfrak{A}_{0}U\left(
t\right) $ which is the initial algebra unitarily rotated by $U\left(
t\right) $, and then $\mathfrak{N}^{\text{out}}_{[0,t]}=U\left( t\right) ^{\ast }%
\mathfrak{N}_{[0,t]}U\left( t\right) $ which is the noise algebra unitarily
rotated by $U\left( t\right) $ the total output algebra. Note that the
measurement algebra $\mathfrak{Y}_{[0,t]}$ is a (commutative) subalgebra of $%
\mathfrak{N}_{[0,t]}^{\text{out}}$. The rotated process $\tilde{F}_{t}\left[ %
\left[ Y\right] \right] $ takes values in $U\left( t\right) ^{\ast }\mathfrak{A}%
_{t]}U\left( t\right) \cong \mathfrak{A}_{t}\otimes \mathfrak{Y}_{[0,t]}$ and has
domain $\mathfrak{Y}_{[0,t]}$, and again we have the central feature is that the
domain space commutes with the value space. This, in fact, is our justification for writing it in
the form $\tilde{F}_{t}\left[ \left[ Y\right] \right] $.
\end{observation}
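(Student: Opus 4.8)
The plan is to prove every clause of the Observation by transporting the corresponding \emph{input-picture} statement across the $*$-automorphism $\mathrm{Ad}_{U(t)^{\ast}}\colon X\mapsto U(t)^{\ast}XU(t)$ of $\mathscr{B}(\mathfrak{H})$. The single fact doing all the work is that, for each fixed $t$, conjugation by the unitary $U(t)$ is a spatial $*$-isomorphism: it preserves products, adjoints, weak closures, commutants, spectral projections, and hence functional calculus and affiliation. So I would first settle the two input-picture assertions directly, and then simply apply $\mathrm{Ad}_{U(t)^{\ast}}$ to obtain their output-picture counterparts.

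First I would prove the input-picture claim that the domain algebra $\mathfrak{Z}_{[0,t]}$ commutes with the value algebra $\mathfrak{A}_{0}\otimes\mathfrak{Z}_{[0,t]}$. This rests on two elementary facts. By definition a control process is a \emph{commutative} family acting trivially on the initial space, so $\mathfrak{Z}_{[0,t]}$ is abelian and therefore commutes with itself; and $\mathfrak{Z}_{[0,t]}\subset\mathfrak{N}_{[0,t]}$ acts only on the noise factor while $\mathfrak{A}_{0}$ acts only on the initial factor, so $[\mathfrak{A}_{0},\mathfrak{Z}_{[0,t]}]=0$. Hence $\mathfrak{Z}_{[0,t]}$ commutes with each tensor factor of $\mathfrak{A}_{0}\otimes\mathfrak{Z}_{[0,t]}$, and therefore with the whole algebra; since $F_{t}[[Z]]$ is affiliated to this algebra with commutative generators drawn from $\mathfrak{Z}_{[0,t]}$, the domain commutes with the value space, as claimed.

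Next I would apply $\mathrm{Ad}_{U(t)^{\ast}}$. Writing a generic controlled coefficient schematically as a function of the commutative generators with coefficients $A_{\alpha}\in\mathfrak{A}_{0}$, the isomorphism property gives $U(t)^{\ast}A_{\alpha}U(t)\in\mathfrak{A}_{t}$ and, because the automorphism intertwines functional calculus, $U(t)^{\ast}f(Z)U(t)=f\big(U(t)^{\ast}ZU(t)\big)=f(Y)$, using $Y_{s}=U(t)^{\ast}Z_{s}U(t)$ for $s\le t$. Thus $\tilde F_{t}[[Y]]$ is \emph{the same functional dependence} on the rotated control $Y$ with coefficients now in $\mathfrak{A}_{t}$, which is exactly the justification for the notation $\tilde F_{t}[[Y]]$. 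The domain algebra is carried to $U(t)^{\ast}\mathfrak{Z}_{[0,t]}U(t)=\mathfrak{Y}_{[0,t]}$ by definition, and the value algebra to $\mathfrak{A}_{t}\otimes\mathfrak{Y}_{[0,t]}$; this remains a genuine commuting tensor product because a $*$-isomorphism carries a spatial tensor product of commuting algebras to one, the commutation $[\mathfrak{A}_{t},\mathfrak{Y}_{[0,t]}]=U(t)^{\ast}[\mathfrak{A}_{0},\mathfrak{Z}_{[0,t]}]U(t)=0$ being the image of the input-picture commutation. Applying the same automorphism to the result of the previous paragraph yields that $\mathfrak{Y}_{[0,t]}$ commutes with $\mathfrak{A}_{t}\otimes\mathfrak{Y}_{[0,t]}$, i.e.\ the domain again commutes with the value space. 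Finally, since $U(t)$ is adapted, $U(t)\in\mathfrak{A}_{t]}$, so $\mathrm{Ad}_{U(t)^{\ast}}$ maps $\mathfrak{A}_{t]}$ onto itself, while $\mathfrak{A}_{t}$ together with the rotated noise algebra $\mathfrak{N}^{\text{out}}_{[0,t]}$ generate it; this identifies $U(t)^{\ast}\mathfrak{A}_{t]}U(t)$ with $\mathfrak{A}_{t}\otimes\mathfrak{N}^{\text{out}}_{[0,t]}$ and exhibits $\mathfrak{Y}_{[0,t]}\subset\mathfrak{N}^{\text{out}}_{[0,t]}$ as the commutative factor in which $\tilde F_{t}$ actually lives.

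The step I expect to be the genuine obstacle is the careful handling of unboundedness: the controlled coefficients are only \emph{affiliated} to the relevant von Neumann algebras rather than bounded, so I must ensure that conjugation by $U(t)$ preserves affiliation and commutes with the functional calculus used to express $\tilde F_{t}[[Y]]$ as the same functional dependence on the rotated control $Y$. This holds because $\mathrm{Ad}_{U(t)^{\ast}}$ is a normal $*$-automorphism and so transports spectral resolutions verbatim, but it is the point that needs to be stated rather than waved at. A secondary, purely bookkeeping, subtlety is that the ambient rotated algebra is really $\mathfrak{A}_{t}\otimes\mathfrak{N}^{\text{out}}_{[0,t]}$, with $\mathfrak{Y}_{[0,t]}$ only a commutative subalgebra of the rotated noise factor; the stated isomorphism $\cong\mathfrak{A}_{t}\otimes\mathfrak{Y}_{[0,t]}$ should be read as locating the sub-factor relevant to $\tilde F_{t}$, not as an equality with all of $U(t)^{\ast}\mathfrak{A}_{t]}U(t)$.
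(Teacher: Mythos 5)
Your proposal is correct and takes essentially the same route as the paper: the Observation is justified there precisely by transporting the input-picture structure (coefficients in $\mathfrak{A}_{0}$, commutative domain $\mathfrak{Z}_{[0,t]}$, domain commuting with value space) through the spatial $*$-isomorphism $X\mapsto U(t)^{\ast}XU(t)$, which is exactly your argument, with the affiliation and functional-calculus points spelled out more explicitly than the paper bothers to. Your closing caveat --- that $U(t)^{\ast}\mathfrak{A}_{t]}U(t)\cong\mathfrak{A}_{t}\otimes\mathfrak{Y}_{[0,t]}$ should be read as locating the sub-factor in which $\tilde{F}_{t}[[Y]]$ lives, the full rotated algebra being generated by $\mathfrak{A}_{t}$ and $\mathfrak{N}^{\text{out}}_{[0,t]}$ --- is a fair and correct reading of the paper's slightly loose notation.
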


\begin{notation}
We now take the unitary process $U$ performing the rotation to be the controlled unitary $U^{[Z]}$.
In the following, we will assume this choice, and set
\begin{eqnarray}
\tilde{F}_{t}\left[ \left[ Y\right] \right] \triangleq U_t^{[Z] \, \ast } \,F_{t}%
\left[ \left[ Z\right] \right] \,U_t^{[Z]} .
\label{eq:conversion}
\end{eqnarray}
For the special case where $F_t [[Z]] \equiv X \otimes I$, with $X \in \mathfrak{A}_0$, we will also write
\begin{eqnarray}
j_{t} ^{\left[ \left[ Y\right] \right] } (X) \triangleq U_t^{[Z] \, \ast } \, \big[ X \otimes I \big] \,U_t^{[Z]} ,
\label{eq:j_t^Y}
\end{eqnarray}
for $F_t [[Z]] \equiv I \otimes B(t)$ we write
\begin{eqnarray}
B^{\left[ \left[ Y\right] \right] \text{out}}_k(t) \triangleq U_t^{[Z] \, \ast } \, \big[ I \otimes B_k(t) \big] \,U_t^{[Z]} ,
\label{eq:B^Y}
\end{eqnarray}
and similarly 
$\Lambda_{jk}^{\left[ \left[ Y\right] \right] \text{out}}(t) \triangleq 
U_t^{[Z] \, \ast } \, \big[ I \otimes \Lambda_{jk}(t) \big] \,U_t^{[Z]}$
We refer top the description in terms of the control process $Z$ as the input picture and the description in terms of 
the measurement readout process $Y$ as the output picture.
\end{notation}

Note that (\ref{eq:j_t^Y}) and (\ref{eq:B^Y}) were previously denoted as $j_t^{[Z]} (X)$ and $B^{\left[ Y\right] \text{out}}_k(t)$,
respectively. The reader may well have noticed that sometimes we use single square brackets and sometimes double - to recap, we use single brackets to indicate a dependence of an operator-valued process on a commutative control process but use 
double square brackets to indicate an operator-valued function with a commutative domain whenever the domain and range variables commute. 

\begin{proposition}[Non-demolition Principle for Controlled Flows]
For each $s \le t$ and $X \in \mathfrak{A}_0$, we have that $Y_s$ will 
commute with $j_{t} ^{\left[ \left[ Y\right] \right] } (X)$.
\end{proposition}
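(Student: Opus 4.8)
The plan is to exploit that both $Y_s$ and $j_t^{[[Y]]}(X)$ are conjugates, by one and the same unitary $U_t^{[Z]}$, of operators whose commutator is transparent on the input side. Concretely, for $s \le t$ I would write $Y_s = U_t^{[Z]\,\ast} Z_s\, U_t^{[Z]}$ and $j_t^{[[Y]]}(X) = U_t^{[Z]\,\ast}\,[X\otimes I]\,U_t^{[Z]}$, so that inserting $U_t^{[Z]} U_t^{[Z]\,\ast}=I$ between the two factors of the product collapses the inner unitaries and yields
\[
\big[\, Y_s,\; j_t^{[[Y]]}(X)\,\big] = U_t^{[Z]\,\ast}\,\big[\, Z_s,\; X\otimes I \,\big]\,U_t^{[Z]}.
\]
The right-hand side vanishes because $Z_s$, being a control process, acts trivially on the initial space $\mathfrak{h}_0$, while $X\otimes I$ acts trivially on the noise factor; thus $[Z_s, X\otimes I]=0$ and the proposition follows.

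The one step that genuinely requires justification is the identity $Y_s = U_t^{[Z]\,\ast} Z_s\, U_t^{[Z]}$ for all $t\ge s$, rather than only at $t=s$ where it is the definition. I would establish it by factoring $U_t^{[Z]} = V_{(s,t]}\,U_s^{[Z]}$, with $V_{(s,t]} = U_t^{[Z]} U_s^{[Z]\,\ast}$ the propagator over the interval $(s,t]$, and showing that $V_{(s,t]}$ commutes with $Z_s$; granting this, $V_{(s,t]}^{\ast} Z_s V_{(s,t]} = Z_s$ gives $U_t^{[Z]\,\ast} Z_s\, U_t^{[Z]} = U_s^{[Z]\,\ast} Z_s\, U_s^{[Z]} = Y_s$. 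This is exactly the parenthetical remark accompanying the definition of $\mathfrak{Y}_{[0,t]}$.

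The commutation of $V_{(s,t]}$ with $Z_s$ is where the controlled setting differs from the autonomous case, and is the point I would present most carefully. The propagator $V_{(s,t]}$ is generated from the increments $dB_k(r), dB_k^\ast(r), d\Lambda_{jk}(r)$ with $r\in(s,t]$ — which act on $\mathfrak{N}_{(s,\infty)}$ and hence commute with $Z_s\in\mathfrak{N}_{[0,s]}$ — together with the controlled coefficients $S_r[[Z]], L_r[[Z]], H_r[[Z]]$. Unlike in the constant-coefficient theory, these coefficients depend on the control history, and for $r>s$ this history may reach back before time $s$; nevertheless they are affiliated to $\mathfrak{A}_0\otimes\mathfrak{Z}_{[0,r]}$, and since the family $\{Z_u\}$ is commutative and $Z_s$ acts trivially on $\mathfrak{A}_0$, every such coefficient commutes with $Z_s$. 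Thus $V_{(s,t]}$ commutes with $Z_s$ after all. Once this is in place, the remainder is the purely algebraic collapse above, and the result is seen to be the controlled analogue of the self-nondemolition relation that underlies the existence of the quantum filter.
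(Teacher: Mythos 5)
Your proof is correct and takes essentially the same route as the paper: the paper likewise reduces the claim to the identity $Y_s = U_t^{[Z]\,\ast}\, Z_s\, U_t^{[Z]}$ for $t \ge s$, established by factoring $U_t^{[Z]} = U_{t,s}^{[Z]}\, U_s^{[Z]}$ through a propagator over $(s,t]$ that commutes with $Z_s$ because the generator increments $dG_\tau^{[Z]}$, $s \le \tau \le t$, commute with $Z_s$, after which the commutator collapses to $U_t^{[Z]\,\ast}\big[ X \otimes I, I \otimes Z_s \big] U_t^{[Z]} = 0$. The only difference is cosmetic (you define the propagator as $U_t^{[Z]} U_s^{[Z]\,\ast}$ while the paper defines it by an integral equation), and your spelled-out justification that the controlled coefficients commute with $Z_s$ --- via commutativity of the control family together with its trivial action on $\mathfrak{h}_0$ --- makes explicit a step the paper merely asserts.
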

\begin{proof} We define the two-parameter family $\big\{ U_{t,s}^{[Z]}: t \ge s \big\}$ by $U_{t,s}^{[Z]} =
I + \int_s^t dG^{[Z]}_\tau U^{[Z]}_{\tau , s} $. We see that $ U^{[Z]}_t \equiv U^{[Z]}_{t,0}$, and that
$U^{[Z]}_t \equiv U^{[Z]}_{t,s} \, U^{[Z]}_s$. In particular, $Z_s$ commutes with $U^{[Z]}_{t,s}$ for $t \ge s$,
since $dG_\tau [[Z]]$ commutes with $Z_s$ for all $s \le \tau \leq t$: this implies that for $t \ge s$,
\begin{eqnarray*}
U^{[Z]\ast}_t \big[ I \otimes Z_s \big] U^{[Z]}_t = 
U^{[Z]\ast}_s U^{[Z]\ast}_{t,s}  \big[ I \otimes Z_s \big]  U^{[Z]}_{t,s}U^{[Z]}_s
= Y_s .
\end{eqnarray*}
Therefore, for $t \geq s$,
\begin{eqnarray*}
\big[ j_t^{[[Y]] }(X) , Y_s \big]  = U^{[Z]\ast}_t  \big[ X \otimes I, I \otimes Z_s \big]  U^{[Z]}_s =0.
\end{eqnarray*}
\end{proof} 

While the proof is similar to that for standard uncontrolled flows, see \cite{BouvanHJam07}, however
we note that in the latter situation we
also have the separate identities $\big[ j_t (X) ,B^{ \text{out}}_k(t)  \big] =
\big[ j_t (X) ,B^{  \text{out}\ast}_k(t)  \big] =
\big[ j_t  (X) ,\Lambda^{  \text{out}}_{jk}(t)  \big] =
0$, which need not necessarily hold true for controlled flows.

\begin{lemma}
Let us fix a control process $Z$ and take $U^{\left[ Z\right] }$ to be the unitary evolution generated by the
controlled $SLH$ coefficient processes $\left( S\left[ \left[ Z\right] \right]
,L\left[ \left[ Z\right] \right] ,H\left[ \left[ Z\right] \right] \right) $.
Then $U_{t}^{\left[ Z\right] }\equiv V_{t}^{\left[ Y\right] }$ where $V_{t}^{%
\left[ Y\right] }$ is the solution to the quantum stochastic differential
equation
\begin{eqnarray}
dV_{t}^{\left[ Y\right] }=V_{t}^{\left[ Y\right] }\,d\tilde{G}_{t} ^{
\left[ Y  \right]} ,\quad V_{0}^{\left[ Y\right] }=I,
\end{eqnarray}
where 
\begin{eqnarray}
d \tilde{G}_t^{[Y]}  &=&
-\left( \frac{1}{2} \tilde{L}_{\mathsf{k},t}[[Y]]^{\ast } \tilde{L}_{\mathsf{k},t}[[Y]]+i \tilde{H}_t [[Y]] \right) dt
+\sum_{j\in \mathsf{k}} \tilde{L}_{j,t}[[Y]]dB_{j}\left( t\right)^{\ast }\nonumber \\
&&  -\sum_{j,k\in \mathsf{k}} \tilde{S}_{jk,t}[[Y]] \tilde{L}_{k,t}[[Y]]dB_{k}\left( t\right) +\sum_{j,k\in 
\mathsf{k}}(\tilde{S}_{jk,t}[[Y]]-\delta _{jk})d\Lambda _{jk}\left( t\right) ,
\label{eq:dGY}
\end{eqnarray}
\end{lemma}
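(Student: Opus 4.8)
The plan is to recognise that the defining left equation $dU_t^{[Z]}=(dG_t^{[Z]})\,U_t^{[Z]}$ for the controlled unitary can be recast as a \emph{right} equation $dU_t^{[Z]}=U_t^{[Z]}\,(dK_t)$, and then to identify its generator $dK_t$ with the output-picture generator $d\tilde G_t^{[Y]}$ of (\ref{eq:dGY}). Once this is done, $U^{[Z]}$ and $V^{[Y]}$ solve the same right quantum stochastic differential equation with the common initial value $U_0^{[Z]}=I=V_0^{[Y]}$, and uniqueness of solutions forces $U_t^{[Z]}\equiv V_t^{[Y]}$.

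First I would left-multiply the defining equation by $U_t^{[Z]\,\ast}$. Because $U^{[Z]}$ is unitary this gives $dU_t^{[Z]}=U_t^{[Z]}\,(dK_t)$ with $dK_t=U_t^{[Z]\,\ast}(dG_t^{[Z]})U_t^{[Z]}$. Here I would invoke exactly the mechanism used already in the derivation of (\ref{eq:B_out_Z}): each basic increment $dt,\,dB_j(t)^{\ast},\,dB_k(t),\,d\Lambda_{jk}(t)$ is future-pointing and hence commutes exactly with the adapted operators $U_t^{[Z]}$ and $U_t^{[Z]\,\ast}$, so sliding the increments through $U_t^{[Z]}$ produces no It\={o} correction and the conjugation acts only on the operator coefficients of $dG_t^{[Z]}$. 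Using multiplicativity, $U_t^{[Z]\,\ast}(AB)U_t^{[Z]}=(U_t^{[Z]\,\ast}AU_t^{[Z]})(U_t^{[Z]\,\ast}BU_t^{[Z]})$, together with the conversion rule (\ref{eq:conversion}), every controlled coefficient $F_t[[Z]]$ is sent to its output-picture counterpart $\tilde F_t[[Y]]=U_t^{[Z]\,\ast}F_t[[Z]]U_t^{[Z]}$. Term by term, the drift coefficient becomes $-\big(\frac{1}{2}\tilde L_{\mathsf{k},t}[[Y]]^{\ast}\tilde L_{\mathsf{k},t}[[Y]]+i\tilde H_t[[Y]]\big)$, the creation coefficient becomes $\tilde L_{j,t}[[Y]]$, and the scattering coefficient becomes $\tilde S_{jk,t}[[Y]]-\delta_{jk}$, matching (\ref{eq:dGY}) directly.

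The step I expect to be the main obstacle is the annihilation ($dB_k$) coefficient, where operator ordering must be tracked carefully: conjugating $-\sum_j L_{j,t}[[Z]]^{\ast}S_{jk,t}[[Z]]$ yields $-\sum_j \tilde L_{j,t}[[Y]]^{\ast}\tilde S_{jk,t}[[Y]]$, so the right generator inherits precisely the coefficient structure of the input-picture generator (\ref{eq:dGZ}) under $[[Z]]\mapsto[[Y]]$. As an independent check that the ordering is correct I would verify that the resulting $dK_t$ obeys the two right-equation unitarity identities $dK_t+dK_t^{\ast}+dK_t^{\ast}\,dK_t=0$ and $dK_t+dK_t^{\ast}+dK_t\,dK_t^{\ast}=0$ by direct use of the It\={o} table; equivalently, $U_t^{[Z]\,\ast}$ satisfies the left equation $d(U_t^{[Z]\,\ast})=(dK_t^{\ast})\,U_t^{[Z]\,\ast}$, whose isometry and co-isometry conditions are exactly these. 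Finally I would record, via the Observation, that each $\tilde F_t[[Y]]$ is affiliated to $\mathfrak{A}_t\otimes\mathfrak{Y}_{[0,t]}$ with commutative domain $\mathfrak{Y}_{[0,t]}$, so that (\ref{eq:dGY}) is a bona fide output-picture equation driven by the measured filtration, before invoking uniqueness to conclude $U_t^{[Z]}\equiv V_t^{[Y]}$.
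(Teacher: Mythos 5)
Your proof is correct and follows essentially the same route as the paper: you insert $U_t^{[Z]}U_t^{[Z]\,\ast}$ to rewrite $dG_t^{[Z]}\,U_t^{[Z]}$ as $U_t^{[Z]}\,\big(U_t^{[Z]\,\ast}\,dG_t^{[Z]}\,U_t^{[Z]}\big)$, identify the conjugated generator with $d\tilde{G}_t^{[Y]}$, and conclude by uniqueness of solutions from the common initial condition, which is exactly the paper's argument. The details you supply -- future-pointing increments commuting with adapted operators, multiplicativity of the conjugation acting coefficient by coefficient, and the unitarity consistency check -- are precisely what the paper's one-line proof leaves implicit.
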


\begin{proof}
Whilst this is one of the key observations we need from a conceptual; point
of view, its proof is actually trivial. The quantum stochastic differential
equation for the unitary $U^{\left[ Z\right] }$ is easily rearranged to read as
\begin{eqnarray*}
dU_{t}^{\left[ Z\right] }=dG_{t}^{ \left[ Z\right] } \,U_{t}^{%
\left[ Z\right] }=U_{t}^{\left[ Z\right] }U_{t}^{\left[ Z\right] \ast }dG_{t}^{\left[ Z\right] }
 \,U_{t}^{\left[ Z\right] }=U_{t}^{\left[ Z\right] }\,d\tilde{G}_{t}^{ \left[ Y\right] } ,
\end{eqnarray*}
which is the same one as $V_{t}^{\left[ Y\right] }$ with the same initial
condition. 
\end{proof}

\subsection{The Controlled Stochastic Heisenberg Dynamics}

For a fixed operator $X$ in the initial algebra, we have introduced the
evolution $j_{t}^{\left[ Z\right] }\left( X\right) =U_{t}^{\left[ Z\right]
\ast }\left( X\otimes I\right) U_{t}^{\left[ Z\right] }$. This is clearly a
special case of (\ref{eq:conversion}) with $F_{t}\left[ \left[ Z\right] \right] =X\otimes I$.
We could therefore write $j_{t}^{\left[ Z\right] }\left( X\right) $ as $X_{t}%
\left[ \left[ Z\right] \right] $ or, equivalently, as $\tilde{X}_{t}\left[ %
\left[ Y\right] \right] $. The Heisenberg-Langevin equation (\ref{eq:heisenberg_Z}) then becomes

\begin{eqnarray}
d\tilde{X}_{t}\left[ \left[ Y\right] \right] &=& \Bigg\{\frac{1}{2}\sum_{i}\tilde{L}_{i,t}[[Y]]^{\ast }\bigg[\tilde{X}_{t}%
\left[ \left[ Y\right] \right] ),\tilde{L}_{i,t}[[Y]]\bigg]+\frac{1}{2}%
\sum_{i}\bigg[\tilde{L}_{i,t}[[Y]]^{\ast },\tilde{X}_{t}\left[ \left[ Y%
\right] \right] )\bigg]\tilde{L}_{i,t}[[Y]]-i\bigg[\tilde{X}_{t}\left[ \left[
Y\right] \right] ,\tilde{H}_{t}[[Y]]\bigg]\Bigg\}dt  \nonumber \\
&&+\sum_{i,k}\tilde{S}_{ki,t}[[Y]]^{\ast }\bigg[\tilde{X}_{t}\left[ \left[ Y%
\right] \right] ,\tilde{L}_{k,t}[[Y]]\bigg]\,dB_{i}^{\ast }\left( t\right)  
\nonumber \\
&&+\sum_{i,k}\bigg[\tilde{L}_{k,t}[[Y]]^{\ast },\tilde{X}_{t}\left[ \left[ Y%
\right] \right] \bigg]\tilde{S}_{ki,t}[[Y]]\,dB_{i}\left( t\right)   \nonumber
\\
&&+\sum_{i,k}\bigg(\sum_{l}\tilde{S}_{li,t}[[Y]]^{\ast }\tilde{X}_{t}\left[ %
\left[ Y\right] \right] (t)\tilde{S}_{lk,t}[[Y]]-\delta _{ik}\tilde{X}_{t}%
\left[ \left[ Y\right] \right] \bigg)\,d\Lambda _{ik}\left( t\right) .
\label{eq:Heis_Y}
\end{eqnarray}

\subsection{The Controlled System Input-Output Relations}

The output is given as $B_{j}^{\text{out}\left[ Z\right] }\left( t\right)
=U_{t}^{\left[ Z\right] \ast }\left( I\otimes B_{j}\left( t\right) \right)
U_{t}^{\left[ Z\right] }$. Let us write this as $\tilde{B}_{j}^{\text{out}%
\left[ Y\right] } (t)$, then from (\ref{eq:B_out_Z}) and (\ref{eq:braces}) we have
\begin{eqnarray}
d\tilde{B}_{j}^{\text{out}\left[ Y\right] } (t)&&= \sum_{k}\tilde{S}_{jk,t}\left[
\left[ Y\right] \right] d B_{j}  (t) +\tilde{L}%
_{j,t}\left[ \left[ Y\right] \right] dt  \nonumber \\
&&+\Bigg\{\frac{1}{2}\sum_{i}\tilde{L}_{i,t}[[Y]]^{\ast }\bigg[\tilde{B}%
_{j}^{\text{out}\left[ Y\right] }(t),\tilde{L}_{i,t}[[Y]]\bigg]+\frac{1}{2}%
\sum_{i}\bigg[\tilde{L}_{i,t}[[Y]]^{\ast },\tilde{B}_{j}^{\text{out}\left[ Y%
\right] } (t )\bigg]\tilde{L}_{i,t}[[Y]]-i\bigg[%
\tilde{B}_{j}^{\text{out}\left[ Y\right] }(t),\tilde{H}_{t}[[Y]]\bigg]\Bigg\}dt
\nonumber
\\
&&+\sum_{i,k}\tilde{S}_{ki,t}[[Y]]^{\ast }\bigg[\tilde{B}_{j}^{\text{out}%
\left[ Y\right] } (t) ,\tilde{L}_{k,t}[[Y]]\bigg]\,dB_{i}^{\ast }\left( t\right) 
\nonumber \\
&&+\sum_{i,k}\bigg[\tilde{L}_{k,t}[[Y]]^{\ast },\tilde{B}_{j}^{\text{out}%
\left[ Y\right] } (t) \bigg]\tilde{S}_{ki,t}[[Y]]\,dB_{i}\left( t\right)   \nonumber
\\
&&+\sum_{i,k}\bigg(\sum_{l}\tilde{S}_{li,t}[[Y]]^{\ast }\tilde{B}_{j}^{\text{%
out}\left[ Y\right] }(t)\tilde{S}_{lk,t}[[Y]]-\delta _{ik}\tilde{B}_{j}^{%
\text{out}\left[ Y\right] }(t) \bigg)\,d\Lambda _{ik}\left( t\right) .
\label{eq:io_Y}
\end{eqnarray}

\section{Quadrature Feedback}
\label{sec:Quad}
Let us consider a homodyne measurement scheme where we measure the
quadrature process. Here we set
\begin{eqnarray*}
Z\left( t\right) =B\left( t\right) +B\left( t\right) ^{\ast }
\end{eqnarray*}
(for simplicity we consider only a single input single output model, $n=1$).
The measured output is then 
\begin{eqnarray}
Y\left( t\right) =U_{t}^{\left[ Z\right] \ast }\left[ B\left( t\right)
+B\left( t\right) ^{\ast }\right] U_{t}^{\left[ Z\right] }=\tilde{B}^{\text{%
out}}\left( t\right) +\tilde{B}^{\text{out}}\left( t\right) ^{\ast } .
\label{eq:Y_quad}
\end{eqnarray}

In the present case ($n=1$) the equations (\ref{eq:io_Y}) reduce to
\begin{eqnarray}
d\tilde{B}^{\text{out}}\left( t\right)  &=&\tilde{S}_{t}dB\left( t\right) +%
\tilde{L}_{t}dt  \nonumber \\
&&+\Bigg\{\frac{1}{2} \tilde{L}_{t}{}^{\ast }\bigg[\tilde{B}^{\text{%
out}   }(t),\tilde{L}_{t}\bigg]+\frac{1}{2} \bigg[\tilde{L}_{t}{}^{\ast
},\tilde{B}^{\text{out}}(t)\bigg]\tilde{L}_{t}-i\bigg[\tilde{B}^{\text{out}%
}(t),\tilde{H}_{t}\bigg]\Bigg\}dt  \nonumber \\
&&+\tilde{S}_{t}^{\ast }\bigg[\tilde{B}^{\text{out}}(t),\tilde{L}_{t}\bigg]%
\,dB\left( t\right)^\ast +\bigg[\tilde{L}_{t}{}^{\ast },\tilde{B}^{\text{out}%
}\left( t\right) \bigg]\tilde{S}_{t}\,dB\left( t\right) +\bigg(\tilde{S}%
_{t}{}^{\ast }\tilde{B}^{\text{out}}(t)\tilde{S}_{t}-\tilde{B}^{\text{out}%
}\left( t\right) \bigg)\,d\Lambda \left( t\right) .
\label{eq:dB_quad}
\end{eqnarray}
where we drop the $Y$-dependence for convenience.

When dealing with terms such as $\bigg[\tilde{B}^{\text{out} [Y]}(t),\tilde{H}_{t} [[Y]] \bigg]$, we would like to know whether we
remain in the algebra. We note that this may be written as $U_t^{[Z]} \bigg[ I\otimes B(t), H_t [[Z]] \bigg] U^{[Z]}_t$,
so an equivalent problem is to show that $\bigg[ I\otimes B(t), H_t [[Z]] \bigg]$ 
remains in $\mathfrak{A}_0 \otimes \mathfrak{Z}_{[0,t]}$.
The following Lemma gives an affirmative answer.

\begin{lemma}
Let $F_{t}\left[ \left[ Z\right] \right] $ be a stochastic process
controlled by the quadrature process $Z$. Then the commutator $\bigg[ I \otimes
B\left( t\right) ,F_{t}\left[ \left[ Z\right] \right] \bigg] $ is
affiliated to $\mathfrak{A}_0 \otimes \mathfrak{Z}_{[0,t]}$.
\end{lemma}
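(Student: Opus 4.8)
The plan is to reduce the commutator with the annihilation process to a commutator with the \emph{conjugate} quadrature, and then to exploit that this conjugate quadrature commutes with the control process only up to a c-number. First I would introduce the momentum quadrature $P(t) = -i\big( B(t) - B(t)^{\ast }\big)$, which is self-adjoint and satisfies $B(t) = \tfrac{1}{2}\big( Z(t) + i\,P(t) \big)$. Since $Z(t) = Z_{t}$ lies in the commutative algebra $\mathfrak{Z}_{[0,t]}$, and $I \otimes Z_{t}$ therefore commutes with every element affiliated to $\mathfrak{A}_{0}\otimes \mathfrak{Z}_{[0,t]}$, the position-quadrature contribution drops out:
\begin{eqnarray*}
\big[ I\otimes B(t), F_{t}[[Z]] \big] = \tfrac{i}{2}\,\big[ I\otimes P(t), F_{t}[[Z]] \big].
\end{eqnarray*}
Thus it suffices to show that $[ I\otimes P(t), F_{t}[[Z]] ]$ is affiliated to $\mathfrak{A}_{0}\otimes \mathfrak{Z}_{[0,t]}$.

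The essential structural input is the canonical commutation relation between the two quadratures. A direct computation from the Itô table, equivalently from $[B(s), B(t)^{\ast }] = \min (s,t)\,I$, gives for $s \le t$ that $[ P(t), Z_{s} ] = -2i\,s\,I$, which is a \emph{central} element, a multiple of the identity. Because $P(t)$ acts trivially on the initial space $\mathfrak{h}_{0}$, it also commutes with every $X\in \mathfrak{A}_{0}$. Hence the derivation $[ I\otimes P(t), \cdot\,]$ annihilates $\mathfrak{A}_{0}$ and sends each generator $Z_{s}$ of $\mathfrak{Z}_{[0,t]}$ into the centre of the algebra.

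The heart of the argument is then to show that a derivation sending generators to scalars preserves the commutative algebra they generate. I would make this precise using the Weyl operators that generate $\mathfrak{Z}_{[0,t]}$: for $W = \exp\big( i\sum_{j} c_{j} Z_{s_{j}} \big)$ with all $s_{j}\le t$, the centrality of $[ P(t), Z_{s_{j}} ]$ yields the exact identity $[ I\otimes P(t), W ] = \big( 2\sum_{j} c_{j} s_{j} \big)\,W$, so the commutator is a scalar multiple of $W$ and remains in $\mathfrak{Z}_{[0,t]}$. Writing $F_{t}[[Z]]$ as a sum $\sum_{\alpha} X_{\alpha}\otimes f_{\alpha}(Z)$, with $X_{\alpha}\in \mathfrak{A}_{0}$ and each $f_{\alpha}(Z)\in \mathfrak{Z}_{[0,t]}$ represented through such Weyl operators, linearity together with the fact that $P(t)$ sees only the $\mathfrak{Z}_{[0,t]}$ factor gives that $[ I\otimes B(t), F_{t}[[Z]] ]$ again lies in $\mathfrak{A}_{0}\otimes \mathfrak{Z}_{[0,t]}$.

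The main obstacle I anticipate is purely technical: $F_{t}[[Z]]$ is only \emph{affiliated} to $\mathfrak{A}_{0}\otimes \mathfrak{Z}_{[0,t]}$ and may be unbounded, and $P(t)$ is itself unbounded, so the manipulations above are a priori formal. To control this I would first establish the statement for bounded $F_{t}[[Z]]$ by cutting down with the spectral projections of $F_{t}[[Z]]$, which lie in $\mathfrak{Z}_{[0,t]}$ and mutually commute, verify the commutator identity on the joint spectral resolution of the commuting family $\{ Z_{s}: s\le t\}$, and then pass to the limit in the strong resolvent sense to recover the affiliation statement for the general closed operator. It is precisely the commutativity of $\mathfrak{Z}_{[0,t]}$, permitting a genuine joint functional calculus of the $Z_{s}$, that makes this limiting procedure clean, and it is the specific quadrature form $Z = B + B^{\ast }$ that guarantees the commutator $[P(t), Z_{s}]$ is a scalar in the first place.
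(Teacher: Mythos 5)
Your proposal is correct in its essentials, but it takes a genuinely different route from the paper's own proof. The paper assumes $F_{t}[[Z]]$ admits a chaotic expansion $\sum_{n}\int_{\Delta_{n}(t)}F(\tau_{n},\cdots,\tau_{1})\otimes dZ_{\tau_{n}}\cdots dZ_{\tau_{1}}$ and uses the relation $[B_{t},dZ_{\tau}]\equiv d\tau$ (for $\tau<t$) under the integral sign: the commutator simply replaces each $dZ_{\tau_{k}}$ in turn by $d\tau_{k}$, which manifestly leaves the expression affiliated to $\mathfrak{A}_{0}\otimes\mathfrak{Z}_{[0,t]}$. You instead split $B(t)$ into the measured quadrature $Z(t)$, which commutes away, and the conjugate quadrature $P(t)$, whose commutators with the generators $Z_{s}$ are central, and then verify that the resulting derivation preserves the algebra on the Weyl operators generating $\mathfrak{Z}_{[0,t]}$. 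Both arguments rest on the same canonical commutation relation $[B_{t},B_{s}^{\ast}]=(t\wedge s)I$, and both are formal at the same point, namely the passage from a convenient dense family (iterated $dZ$-integrals in the paper, Weyl polynomials for you) to a general affiliated $F$; you at least flag this gap and sketch a mitigation, whereas the paper does not. What the paper's route buys is the explicit replacement formula, which is what is actually invoked in the later examples, e.g. $[I\otimes B(t),F\otimes Z_{t}]=F\otimes tI$ for proportional feedback, $F\otimes t\,g^{\prime}(Z_{t})$ for the nonlinear modulator, and $\int_{0}^{t}F(u)\,du$ for the causal linear filter. Your route buys a cleaner structural explanation (a derivation taking central values on the generators of a commutative algebra maps that algebra into itself, and it is precisely the quadrature form of $Z$ that makes the values central), and it reproduces those same formulas, since $[I\otimes B(t),e^{i\sum_{j}c_{j}Z_{s_{j}}}]=i(\sum_{j}c_{j}s_{j})\,e^{i\sum_{j}c_{j}Z_{s_{j}}}$ yields $[I\otimes B(t),f(Z_{s_{1}},\cdots,Z_{s_{n}})]=\sum_{k}s_{k}\,(\partial_{k}f)(Z_{s_{1}},\cdots,Z_{s_{n}})$. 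Two minor slips in your technical paragraph: the spectral projections you propose to truncate with lie in $\mathfrak{A}_{0}\otimes\mathfrak{Z}_{[0,t]}$, not in $\mathfrak{Z}_{[0,t]}$, and since $F_{t}[[Z]]$ need not be normal you should truncate using $|F_{t}[[Z]]|$ from its polar decomposition, or work on the joint spectral resolution of the commuting family $\{Z_{s}:s\leq t\}$ as you suggest; neither affects the strategy.
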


\begin{proof}
We may assume that the process $F_{t}\left[ \left[ Z\right] \right] $ admits
a chaotic expansion of the form
\begin{eqnarray*}
F_{t}\left[ \left[ Z\right] \right] =\sum_{n=0}^{\infty }\int_{\Delta
_{n}\left( t\right) }F\left( \tau _{n},\cdots ,\tau _{1}\right) \otimes dZ_{\tau
_{n}}\cdots dZ_{\tau _{1}}
\end{eqnarray*}
where $\Delta _{n}\left( t\right) $ is the simplex $\left\{ \left( \tau
_{n},\cdots ,\tau _{1}\right) :t\geq \tau _{n}>\cdots >\tau _{1}\geq
0\right\} $. The commutation relations for the creation and annihilation
processes are $\left[ B_{t},B_{s}^{\ast }\right] =t\wedge s\,I$, where $%
t\wedge s$ is the minimum of $t$ and $s$. We therefore see, that under the
integral sign,
\begin{eqnarray*}
\left[ B_{t},dZ_{\tau }\right] \equiv d\tau ,\quad \left( \tau <t\right) .
\end{eqnarray*}
Therefore
\begin{eqnarray*}
\big[ I \otimes B\left( t\right) ,F_{t}\left[ \left[ Z\right] \right] \big]
=\sum_{n=1}^{\infty }\sum_{k=1}^{n}\int_{\Delta _{n}\left( t\right) }F\left(
\tau _{n},\cdots ,\tau _{1}\right) \otimes dZ_{\tau _{n}}\cdots d\tau _{k}\cdots
dZ_{\tau _{1}}.
\end{eqnarray*}
We see that the commutator is evidently affiliated with $\mathfrak{A}_0 \otimes\mathfrak{Z}_{[0,t]}$.
\end{proof}

\begin{proposition}
\label{prop:io_quad}
The equation (\ref{eq:Y_quad}) implies that 
\begin{eqnarray}
dY\left( t\right) =\tilde{S}_{t}dB\left( t\right) +\tilde{S}_{t}^{\ast
}dB\left( t\right) ^{\ast }+(\tilde{L}_{t}+\tilde{L}_{t}^{\ast })dt.
\label{eq:dY_quad}
\end{eqnarray}
\end{proposition}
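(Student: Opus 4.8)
The plan is to read off $dY$ directly from the controlled input-output relation. Since (\ref{eq:Y_quad}) gives $Y = \tilde{B}^{\text{out}} + \tilde{B}^{\text{out}\,\ast}$, linearity of the It\={o} differential yields $dY = d\tilde{B}^{\text{out}} + (d\tilde{B}^{\text{out}})^\ast$, so I would take the expression (\ref{eq:dB_quad}) for $d\tilde{B}^{\text{out}}$ and add its adjoint. The first line of (\ref{eq:dB_quad}), namely $\tilde{S}_t\,dB + \tilde{L}_t\,dt$, together with its adjoint $\tilde{S}_t^\ast\,dB^\ast + \tilde{L}_t^\ast\,dt$, already produces the claimed right-hand side of (\ref{eq:dY_quad}); to write the adjoint of $\tilde{S}_t\,dB$ as $\tilde{S}_t^\ast\,dB^\ast$ I use that $\tilde{S}_t = U_t^{[Z]\,\ast}S_t[[Z]]U_t^{[Z]}$ lies in $\mathfrak{A}_{t]}$ and therefore commutes with the future-pointing increment. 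The entire content of the proposition is thus that all the remaining terms of (\ref{eq:dB_quad}) cancel against their adjoints.

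Writing $\tilde{B} := \tilde{B}^{\text{out}}$, I would assemble the leftover terms by type. The $dt$ bracket plus its adjoint combines into the output-picture Lindbladian applied to $\tilde{B} + \tilde{B}^\ast = Y$; the two $dB^\ast$ contributions combine to $\tilde{S}_t^\ast[\,Y,\tilde{L}_t\,]\,dB^\ast$; the two $dB$ contributions to $[\,\tilde{L}_t^\ast,Y\,]\,\tilde{S}_t\,dB$; and the two gauge terms to $(\tilde{S}_t^\ast\,Y\,\tilde{S}_t - Y)\,d\Lambda$. Each surviving piece is therefore controlled by the commutators $[\,Y,\tilde{L}_t\,]$, $[\,Y,\tilde{L}_t^\ast\,]$, $[\,Y,\tilde{H}_t\,]$ and by the difference $\tilde{S}_t^\ast\,Y\,\tilde{S}_t - Y$.

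The crux, and the only genuinely non-trivial step, is to show that $Y(t)$ commutes with the rotated coefficients $\tilde{S}_t,\tilde{L}_t,\tilde{H}_t$. I would reduce this to the input picture: $[\,Y(t),\tilde{L}_t\,] = U_t^{[Z]\,\ast}[\,Z(t),L_t[[Z]]\,]U_t^{[Z]}$, and likewise for the others, so it suffices to prove $[\,Z(t),F_t[[Z]]\,] = 0$ for any process controlled by the quadrature $Z = B + B^\ast$. This is exactly where the chaos-expansion argument of the preceding Lemma applies: repeating that computation with $Z(t) = B(t) + B(t)^\ast$ in place of $B(t)$, one finds $[\,B(t),dZ_\tau\,] = +d\tau$ and $[\,B(t)^\ast,dZ_\tau\,] = -d\tau$ for $\tau < t$, so the two contributions cancel identically and $[\,Z(t),F_t[[Z]]\,] = 0$ (equivalently, $Z_t$ already lies in the commutative algebra $\mathfrak{Z}_{[0,t]}$ to which $F_t[[Z]]$ is affiliated). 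Granting this, $[\,Y,\tilde{L}_t\,] = [\,Y,\tilde{L}_t^\ast\,] = [\,Y,\tilde{H}_t\,] = 0$ so the Lindbladian and both diffusion brackets vanish, while $[\,Y,\tilde{S}_t\,] = 0$ with $\tilde{S}_t$ unitary gives $\tilde{S}_t^\ast\,Y\,\tilde{S}_t = Y$, killing the gauge term. Only the first line survives, which is (\ref{eq:dY_quad}).

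As a check I would note that the same result drops out more transparently in the input picture: from $Y = U_t^{[Z]\,\ast}Z(t)U_t^{[Z]}$ and the triple-product rule (\ref{eq:triple}), the non-standard contribution is precisely $U_t^{[Z]\,\ast}\{(dG_t^{[Z]\,\ast})Z + Z(dG_t^{[Z]}) + (dG_t^{[Z]\,\ast})Z(dG_t^{[Z]})\}U_t^{[Z]}$, and since $Z(t)$ commutes with the controlled generator $dG_t^{[Z]}$ (both with its coefficients, by the commutation just established, and with its increments, by adaptedness) one factors $Z$ out to obtain $Z\,[\,dG_t^{[Z]} + dG_t^{[Z]\,\ast} + (dG_t^{[Z]\,\ast})(dG_t^{[Z]})\,] = 0$ by the isometry identity. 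The main obstacle in either route is identical: establishing $[\,Z(t),F_t[[Z]]\,] = 0$, which is what separates the quadrature control $Z = B + B^\ast$ from a bare field (for which, as the remark after the non-demolition proposition warns, the corresponding terms do not cancel).
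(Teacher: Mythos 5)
Your main argument is correct and is essentially the paper's own proof: substitute (\ref{eq:dB_quad}) into $dY = d\tilde{B}^{\text{out}} + d\tilde{B}^{\text{out}\ast}$, observe that the first line and its adjoint already give (\ref{eq:dY_quad}), group the leftover terms with their adjoints into commutators of $Y$ with $\tilde{L}_t, \tilde{H}_t$ and the difference $\tilde{S}_t^\ast Y \tilde{S}_t - Y$, and kill them all using the fact that $Y(t)$ commutes with the output-picture coefficients. The paper disposes of that last commutation with the phrase \lq\lq by construction'' (appealing to its earlier Observation that for $\tilde{F}_t[[Y]]$ the domain algebra $\mathfrak{Y}_{[0,t]}$ commutes with the value algebra); your unwinding of it to the input picture, where $[Z_t, F_t[[Z]]]=0$ follows either from affiliation to $\mathfrak{A}_0\otimes\mathfrak{Z}_{[0,t]}$ or from the chaos-expansion cancellation $[B_t,dZ_\tau]=d\tau$, $[B_t^\ast,dZ_\tau]=-d\tau$, is a correct and more explicit rendering of the same fact. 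Your closing \lq\lq check'' is actually a second, independent derivation the paper does not give: working wholly in the input picture with $Y=U_t^{[Z]\ast}Z(t)U_t^{[Z]}$, the anomalous terms $U^\ast\{(dG^\ast)Z + Z(dG)+(dG^\ast)Z(dG)\}U$ factor as $U^\ast Z\{dG+dG^\ast+(dG^\ast)(dG)\}U=0$ by isometry, while the Itô cross terms $U^\ast(dZ)(dU)$ and $(dU^\ast)(dZ)U$ supply exactly $(\tilde{S}_t-1)dB+\tilde{L}_t dt$ and its adjoint; this route is arguably cleaner since it never requires writing out (\ref{eq:dB_quad}) and parallels the paper's own computation of $dB^{\text{out}}$ in the uncontrolled case.
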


\begin{proof}
We now substitute (\ref{eq:dB_quad}) into $dY\left( t\right) =d\tilde{B}^{\text{out}}\left(
t\right) +d\tilde{B}^{\text{out}}\left( t\right) ^{\ast }$ - we find that the first two terms combine to give (\ref{eq:dY_quad}
while the other terms vanish. To see why, let us examine the $d\Lambda $
term, we will have
\begin{eqnarray*}
\bigg(\tilde{S}_{t}{}^{\ast }\tilde{B}^{\text{out}}(t)\tilde{S}_{t}-\tilde{B}%
^{\text{out}}\left( t\right) \bigg)+\bigg(\tilde{S}_{t}{}^{\ast }\tilde{B}^{%
\text{out}}(t)\tilde{S}_{t}-\tilde{B}^{\text{out}}\left( t\right) \bigg)%
^{\ast }\equiv \bigg(\tilde{S}_{t}{}^{\ast }Y(t)\tilde{S}_{t}-Y\left(
t\right) \bigg)
\end{eqnarray*}
but $\tilde{S}_{t}=\tilde{S}_{t}\left[ \left[ Y\right] \right] $ commutes
with $Y\left( t\right) $ by construction. Similarly, with the combination of $%
d\tilde{B}^{\text{out}}\left( t\right) $ and $d\tilde{B}^{\text{out}}\left(
t\right) ^{\ast }$, we reconstitute commutators of $Y\left( t\right) =\tilde{%
B}^{\text{out}}\left( t\right) +\tilde{B}^{\text{out}}\left( t\right) ^{\ast
}$ with $\tilde{L}_{t}\left[ \left[ Y\right] \right] $ and $\tilde{H}_{t}%
\left[ \left[ Y\right] \right] $, and these similarly vanish identically.
\end{proof}

\subsection{Feedback to the Hamiltonian}
We now look at several examples.

\subsubsection{Proportional Feedback}
Here we just take $\left( W_{t}=Z_{t}\right) $, and choose the $SLH$ coefficients to be
We set 
\begin{eqnarray*}
S_{t}\left[ \left[ Z\right] \right] =S,\quad L_{t}\left[ \left[ Z\right] %
\right] =L,\quad H_{t}\left[ \left[ Z\right] \right] =F\otimes Z_{t},
\end{eqnarray*}
where $S,L$ and $F=F^{\ast }$ are fixed operators on the system space $\mathfrak{%
h}_{0}$. We
find that 
\begin{eqnarray*}
dB^{\text{out}}\left( t\right) = j_t^{[[Y]]}(S) \,
dB\left( t\right) +j_t^{[[Y]]}(L-itF)  \, dt
\end{eqnarray*}
which follows from (\ref{eq:io_Y})\ and the observation that $\big[
I\otimes B\left( t\right) ,H_{t}\left[ \left[ Z\right] \right] \big]
=F\otimes tI$. From (\ref{eq:Heis_Y}), the Heisenberg equations are then
\begin{eqnarray}
d j_t^{[[Y]]} (X)  &=&\Bigg\{ j_t^{[[Y]]} \left( \mathscr{L}\left( X\right)  \right) 
-i \, j_t^{[[Y]]} \left( [X,F] \right)  \,Y_{t} \bigg]\Bigg\}dt  \nonumber \\
&&+ j_t^{[[Y]]} \big( \mathscr{M}\left( X\right) \big)\,dB^{\ast }\left(t\right) 
+ j_t^{[[Y]]} \big( \mathscr{N}\left( X\right) \big) \,dB^{\ast }\left(t\right) 
+ j_t^{[[Y]]} \big( \mathscr{S}\left( X\right) \big) \,d\Lambda^{\ast }\left(
t\right) .   \nonumber
\end{eqnarray}

\subsubsection{Nonlinear Modulator}
This time we replace the modulated process to be $\left( W_{t}=g\left( Z_{t}\right) \right) $, so that 
\begin{eqnarray*}
S_{t}\left[ \left[ Z\right] \right] =S,\quad L_{t}\left[ \left[ Z\right] %
\right] =L,\quad H_{t}\left[ \left[ Z\right] \right] =F\otimes g(Z_{t}),
\end{eqnarray*}
where now $g$ is some nonlinear function. This time we have $\big[ I\otimes
B\left( t\right) ,H_{t}\left[ \left[ Z\right] \right] \big] =F\otimes
tg^{\prime }\left( Z_{t}\right) $ and so 
\begin{eqnarray*}
dB^{\text{out}}\left( t\right) = j_t^{[[Y]]} (S) \,
dB\left( t\right) +\left( j_t^{[[Y]]} (L) -it j_t^{[[Y]]} (F) 
\, g^{\prime }\left( Y_{t}\right)
\right) dt.
\end{eqnarray*}
(Note that $g^{\prime
}\left( Y_{t}\right) $ will commute with $j_t^{[[Y]]} (F) $, etc.) From (\ref{eq:Heis_Y}), the Heisenberg equations are then
\begin{eqnarray}
d j_t^{[[Y]]} (X )  &=&\Bigg\{ j_t^{[[Y]]} \big( \mathscr{L}\left( X\right)\big)
-i j_t^{[[Y]]} \big( [X,F] \big)  \,g(Y_{t}) \bigg]\Bigg\}dt  \nonumber \\
&&+ j_t^{[[Y]]} \big( \mathscr{M}\left( X\right) \big)\,dB^{\ast }\left(t\right) 
+ j_t^{[[Y]]} \big( \mathscr{N}\left( X\right) \big) \,dB^{\ast }\left(t\right) 
+ j_t^{[[Y]]} \big( \mathscr{S}\left( X\right) \big) \,d\Lambda^{\ast }\left(
t\right) .   \nonumber
\end{eqnarray}

\subsubsection{Causal Linear Filter Modulator}

More generally, the modulator may act as a causal linear filter, say
\begin{eqnarray*}
S_{t}\left[ \left[ Z\right] \right] =S,\quad L_{t}\left[ \left[ Z\right] %
\right] =L,\quad H_{t}\left[ \left[ Z\right] \right] =\int_{0}^{t}F\left(
t-s\right) \otimes dZ_{s},
\end{eqnarray*}
where $F\left( \cdot \right) $ is a self-adjoint $\mathfrak{A}_{0}$-valued
function of time. (For the special case $F (t) = F \, h(t)$, with $F=F^\ast \in \mathfrak{A}_0$ and $h(\cdot )$ a fixed
real-valued function, we obtain $H_t [[Z]] = F \otimes W_t$ where $W_t =
\int_{0}^{t}h\left(
t-s\right) \otimes dZ_{s}$ is a convolution.)
This time we have $\big[ I\otimes B\left( t\right) ,H_{t}%
\left[ \left[ Z\right] \right] \big] =\int_{0}^{t}F\left( t-s\right)
\otimes ds\equiv \int_{0}^{t}F\left( u\right) \otimes du$, and so 
\begin{eqnarray*}
dB^{\text{out}} ( t ) =j_t^{[[Y]]}(S) \,
dB ( t ) +\left( j_t^{[[Y]]} (L)
-i\int_{0}^{t} j_t^{[[Y]]} (F_{u})  du \right) dt.
\end{eqnarray*}
From (\ref{eq:Heis_Y}), the Heisenberg equations are then
\begin{eqnarray}
dj_t^{[[Y]]} (X)    &=& \bigg\{  j_t^{[[Y]]} \big(\mathscr{L}\left( X\right) \big)
-i\int_{0}^{t} j_t^{[[Y]]} \big( [X,F_{t-s}] \big) \,dY_{s}\bigg] \bigg\} dt  \nonumber \\
&&+ j_t^{[[Y]]} \big( \mathscr{M}\left( X\right) \big)\,dB^{\ast }\left(t\right) 
+ j_t^{[[Y]]} \big( \mathscr{N}\left( X\right) \big) \,dB^{\ast }\left(t\right) 
+ j_t^{[[Y]]} \big( \mathscr{S}\left( X\right) \big) \,d\Lambda^{\ast }\left(
t\right) .  \nonumber
\end{eqnarray}
(Again, note that the integrand $j_t^{[[Y]]} \big( [X,F_{t-s}] \big) $ commutes with the increment $dY_{s}$.)

\subsection{Feedback to the Coupling Operator}

Let us consider a cavity mode $a$ with the $SLH$-coefficients
\begin{eqnarray*}
S_{t}\left[ \left[ Z\right] \right] =e^{i\theta },\quad L_{t}\left[ \left[ Z%
\right] \right] =\sqrt{\gamma }a\otimes I+\lambda I\otimes Z_{t},\quad
H=\omega a^{\ast }a.
\end{eqnarray*}
We denote the time-evolved mode as $\tilde{a}_{t}= j_t^{[[Y]]} (a) = U_{t}^{\left[ Z\right]
\ast }\left( a\otimes I\right) U_{t}^{\left[ Z\right] }$, then the
Heisenberg equations, and input-output equations read as
\begin{eqnarray*}
d\tilde{a}_{t} &=&-\frac{1}{2}\left( \sqrt{\gamma }\tilde{a}_{t}+\lambda
Y_{t}\right) dt-i\omega \tilde{a}_{t}-\sqrt{\gamma }dB\left( t\right)  \\
dY_{t} &=&e^{i\theta }dB\left( t\right) +e^{-i\theta }dB\left( t\right)
^{\ast }+\left[ \sqrt{\gamma }\left( \tilde{a}_{t}+\tilde{a}_{t}^{\ast
}\right) +2\lambda Y_{t}\right] dt
\end{eqnarray*}
which may be written as the \textit{linear} differential equation $dx_{t}=\mathsf{A}x_{t}dt+\mathsf{B}du_{t}$ where
\begin{eqnarray*}
x_{t}=\left[ 
\begin{array}{c}
\tilde{a}_{t} \\ 
\tilde{a}_{t}^{\ast } \\ 
Y_{t}
\end{array}
\right] ,du_{t}=\left[ 
\begin{array}{c}
dB\left( t\right)  \\ 
dB\left( t\right) ^{\ast }
\end{array}
\right] ,\qquad \mathsf{A}=\left[ 
\begin{array}{ccc}
-\left( \frac{1}{2}\gamma +i\omega \right)  & 0 & -\frac{1}{2}\sqrt{\gamma }%
\lambda  \\ 
0 & -\left( \frac{1}{2}\gamma -i\omega \right)  & -\frac{1}{2}\sqrt{\gamma }%
\lambda  \\ 
\sqrt{\gamma } & \sqrt{\gamma } & 2\lambda 
\end{array}
\right] ,\mathsf{B}=\left[ 
\begin{array}{cc}
-\sqrt{\gamma }0 & 0 \\ 
0 & -\sqrt{\gamma } \\ 
e^{i\theta } & e^{-i\theta }
\end{array}
\right] .
\end{eqnarray*}
The solution is then $x_{t}=e^{\mathsf{A}t}x_{0}+\int_{0}^{t}e^{\mathsf{A}%
\left( t-s\right) }\mathsf{B}du_{s}$. The matrix $\mathsf{A}$ has determinant 
$2\omega ^{2}\lambda $ which is non-degenerate for $\lambda \neq 0$ provided 
$\omega \neq 0$.

We will concentrate on the case $\omega =0$, where the eigenvalues of $%
\mathsf{A}$ are readily calculated to be $0,-\frac{1}{2}\gamma ,-\frac{1}{2}%
\gamma +2\lambda $. We see that $\mathsf{A}$ is (marginally) stable provided
that $\lambda <\frac{1}{4}\gamma $. The solution is
\begin{eqnarray*}
\tilde{a}_{t}=f\left( t\right) a+g\left( t\right) a^{\ast
}+\int_{0}^{t}\left( e^{i\theta }k\left( t-s\right) -\sqrt{\gamma }f\left(
t-s\right) \right) dB\left( t\right) +\int_{0}^{t}\left( e^{-i\theta
}k\left( t-s\right) -\sqrt{\gamma }g\left( t-s\right) \right) dB\left(
t\right) ^{\ast }
\end{eqnarray*}
with
\begin{eqnarray*}
f\left( t\right)  &=&-\frac{2\lambda }{\gamma -4\lambda }+\frac{\gamma }{%
\gamma -2\lambda }e^{-(\frac{1}{2}\gamma -2\lambda )t}+\frac{1}{2}e^{-\frac{1%
}{2}\gamma t}, \\
g\left( t\right)  &=&-\frac{2\lambda }{\gamma -4\lambda }+\frac{\gamma }{%
\gamma -2\lambda }e^{-(\frac{1}{2}\gamma -2\lambda )t}-\frac{1}{2}e^{-\frac{1%
}{2}\gamma t}, \\
k\left( t\right)  &=&\frac{\sqrt{\gamma }\lambda }{\gamma -4\lambda }\left(
e^{-(\frac{1}{2}\gamma -2\lambda )t}-1\right) .
\end{eqnarray*}
We also find that
\begin{eqnarray*}
Y_{t}=\int_{0}^{t}\left( e^{i\theta }r\left( t-s\right) +\gamma p\left(
t-s\right) \right) dB\left( s\right) +\int_{0}^{t}\left( e^{-i\theta
}r\left( t-s\right) +\gamma p\left( t-s\right) \right) dB\left( s\right)
^{\ast }
\end{eqnarray*}
with
\begin{eqnarray*}
p\left( t\right) =-\frac{2}{\gamma -4\lambda }\left( 1-e^{-(\frac{1}{2}%
\gamma -2\lambda )t}\right) ,\quad r\left( t\right) =-\frac{1}{\gamma
-4\lambda }\left( \gamma -4\lambda e^{-(\frac{1}{2}\gamma -2\lambda
)t}\right) .
\end{eqnarray*}

\section{Photon Number Feedback}
An alternative choice is to measure the photon number of the output field, $\Lambda^{ \text{out}}(t)$.
(For convenience, we will treat the $n=1$ input field case.)
This means that we set $Z_t \equiv \Lambda (t)$. In the case where the flow is determined by fixed $SLH$-components
on the initial algebra $\mathfrak{A}_0$, we have 
\begin{eqnarray}
d\Lambda^{\text{out}} (t) = d\Lambda (t) + j_t (S^\ast L) dB^\ast (t) +j_t (L^\ast S)
dB(t) + j_t (L^\ast L) dt .
\label{eq:Lambda_uncontrolled}
\end{eqnarray}

We now show that we obtain the same sort of consistency result we had for quadrature measurements from Proposition \ref{prop:io_quad}. 

\begin{proposition}
Using the same notation as in Proposition \ref{prop:io_quad}, and taking $Z \equiv \Lambda$, we have that the output number operator
for the controlled dynamics is
\begin{eqnarray}
d\Lambda^{\text{out}\left[ Y\right] } (t) \equiv 
d\Lambda (t) + \tilde{S}_t^\ast  \tilde{L}_t \, dB^\ast (t) 
+ \tilde{L}_t^\ast \tilde{S}_t \, dB(t) + \tilde{L}_t^\ast \tilde{L}_t dt .
\label{eq:Lambda_Y}
\end{eqnarray}
\end{proposition}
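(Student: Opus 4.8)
The plan is to differentiate $\Lambda^{\text{out}[Y]}(t)=U_t^{[Z]\,\ast}[I\otimes\Lambda(t)]U_t^{[Z]}$ directly with the triple-product rule (\ref{eq:triple}), in exactly the manner that produced (\ref{eq:B_out_Z}) for the annihilation output, except that now the conjugated operator $\Lambda(t)$ carries its own scattering increment $d\Lambda(t)$. Since $Z=\Lambda$ here, I would first note that the quantity being computed is nothing other than the measured readout itself, $\Lambda^{\text{out}[Y]}(t)=U_t^{[Z]\,\ast}Z_tU_t^{[Z]}=Y(t)$, so this is the photon-counting analogue of the consistency result in Proposition \ref{prop:io_quad}. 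Expanding the seven-term differential, I would sort the terms into the four \emph{diagonal} terms in which $\Lambda$ contributes its increment, namely $U^{[Z]\ast}(d\Lambda)U^{[Z]}$, $(dU^{[Z]\ast})(d\Lambda)U^{[Z]}$, $U^{[Z]\ast}(d\Lambda)(dU^{[Z]})$ and $(dU^{[Z]\ast})(d\Lambda)(dU^{[Z]})$, and the three \emph{brace} terms in which the operator $\Lambda(t)$ stands between the unitary increments, collecting as $U^{[Z]\ast}\{(dG_t^{[Z]\ast})\Lambda(t)+\Lambda(t)(dG_t^{[Z]})+(dG_t^{[Z]\ast})\Lambda(t)(dG_t^{[Z]})\}U^{[Z]}$.

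The crux, and the step that separates photon counting decisively from the quadrature output (\ref{eq:braces}), is showing the brace terms vanish. There the analogous terms survive because $I\otimes B(t)$ fails to commute with the control-dependent coefficients. Here the control is $Z=\Lambda$, so $S_t[[Z]],L_t[[Z]],H_t[[Z]]$ are affiliated to $\mathfrak{A}_0\otimes\mathfrak{Z}_{[0,t]}$ with $\mathfrak{Z}_{[0,t]}=\{\Lambda(s):0\le s\le t\}$ a \emph{commutative} algebra already containing $\Lambda(t)$; hence $\Lambda(t)$ commutes with every coefficient, and it commutes with the future-pointing increments $dt,dB,dB^\ast,d\Lambda$ by time ordering. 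Together these give $[I\otimes\Lambda(t),dG_t^{[Z]}]=[I\otimes\Lambda(t),dG_t^{[Z]\ast}]=0$, so pulling $\Lambda(t)$ to the left turns the brace into $\Lambda(t)\{dG_t^{[Z]}+dG_t^{[Z]\ast}+(dG_t^{[Z]\ast})(dG_t^{[Z]})\}$, which is zero by the isometry identity recorded after (\ref{eq:dGZ}). I expect this commutativity check to be the one genuinely conceptual point; it is clean once one observes that the conjugated variable lies inside the commutative control algebra, which is precisely what fails for $B(t)$.

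It then remains to evaluate the four diagonal terms with the Itô table applied to the single-mode form of (\ref{eq:dGZ}), dropping the $Y$-dependence as in (\ref{eq:dB_quad}). Using $d\Lambda\,d\Lambda=d\Lambda$, $d\Lambda\,dB^\ast=dB^\ast$ and $dB\,d\Lambda=dB$, I would find $(d\Lambda)(dG_t^{[Z]})=L\,dB^\ast+(S-1)d\Lambda$, $(dG_t^{[Z]\ast})(d\Lambda)=L^\ast\,dB+(S^\ast-1)d\Lambda$, and the double contraction $(dG_t^{[Z]\ast})(d\Lambda)(dG_t^{[Z]})=L^\ast L\,dt+L^\ast(S-1)dB+(S^\ast-1)L\,dB^\ast+(S^\ast-1)(S-1)d\Lambda$, while $U^{[Z]\ast}(d\Lambda)U^{[Z]}=d\Lambda$; conjugation by $U^{[Z]}$ replaces each system coefficient by its output-picture version $\tilde S,\tilde L$. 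Summing, the $dB^\ast$ coefficient assembles as $\tilde L+(\tilde S^\ast-1)\tilde L=\tilde S^\ast\tilde L$, the $dB$ coefficient as $\tilde L^\ast+\tilde L^\ast(\tilde S-1)=\tilde L^\ast\tilde S$, the $dt$ coefficient as $\tilde L^\ast\tilde L$, and the $d\Lambda$ coefficient as $1+(\tilde S^\ast-1)+(\tilde S-1)+(\tilde S^\ast-1)(\tilde S-1)$, which collapses to $1$ on invoking unitarity $\tilde S^\ast\tilde S=I$. This reproduces (\ref{eq:Lambda_Y}), the structure mirroring the uncontrolled identity (\ref{eq:Lambda_uncontrolled}) under the replacement of the Heisenberg map $j_t$ by its output-picture counterpart $U_t^{[Z]\,\ast}(\,\cdot\,)U_t^{[Z]}$.
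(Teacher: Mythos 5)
Your proposal is correct and follows essentially the same route as the paper: the decisive observation—that taking $Z\equiv\Lambda$ places $\Lambda(t)$ inside the commutative control algebra $\mathfrak{Z}_{[0,t]}$, so it commutes with $S_t[[Z]],L_t[[Z]],H_t[[Z]]$ and the anomalous correction terms vanish—is exactly the paper's ``almost trivial observation'', and your It\^{o}-table evaluation of the surviving terms reproduces the computation behind the paper's analogue of (\ref{eq:io_Y}). The only cosmetic difference is bookkeeping: you kill the brace terms in the input picture by pulling $\Lambda(t)$ through $dG_t^{[Z]}$ and invoking the isometry identity, whereas the paper first writes the output-picture equation with its commutator corrections and then observes that each commutator vanishes.
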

\begin{proof}
A simple application of the quantum Ito\={o} calculus shows that, for a controlled flow, the analogue of (\ref{eq:io_Y}) is
\begin{eqnarray*}
d\Lambda^{\text{out}\left[ Y\right] } (t) &&= 
d\Lambda (t) + \tilde{S}_t^\ast  \tilde{L}_t \, dB^\ast (t) + \tilde{L}_t^\ast \tilde{S}_t \, dB(t) + \tilde{L}_t^\ast \tilde{L}_t dt
\nonumber \\
&&+\Bigg\{\frac{1}{2}\tilde{L}_{t}[[Y]]^{\ast }\bigg[\tilde{\Lambda}^{\text{out}\left[ Y\right] }(t),\tilde{L}_{t}[[Y]]\bigg]
+\frac{1}{2}\sum_{i}\bigg[\tilde{L}_{t}[[Y]]^{\ast },\tilde{\Lambda}^{\text{out}\left[ Y\right] } (t )\bigg]\tilde{L}_{t}[[Y]]
-i\bigg[\tilde{\Lambda}^{\text{out}\left[ Y\right] }(t),\tilde{H}_{t}[[Y]]\bigg]\Bigg\}dt
\nonumber
\\
&&+\tilde{S}_{t}[[Y]]^{\ast }\bigg[\tilde{\Lambda}^{\text{out}\left[ Y\right] } (t) ,\tilde{L}_{t}[[Y]]\bigg]\,dB^{\ast }\left( t\right) 
\nonumber \\
&&+\bigg[\tilde{L}_{t}[[Y]]^{\ast },\tilde{\Lambda}^{\text{out}\left[ Y\right] } (t) \bigg]\tilde{S}_{t}[[Y]]\,dB \left( t\right)   \nonumber
\\
&&+\bigg(\tilde{S}_{t}[[Y]]^{\ast }\tilde{\Lambda}^{\text{out}\left[ Y\right] }(t)\tilde{S}_{t}[[Y]]-\tilde{\Lambda}^{\text{out}\left[ Y\right] }(t) \bigg)\,d\Lambda  \left( t\right) .
\end{eqnarray*}
Fortunately, the new terms vanish for a fairly simple reason. If we take one of the terms, say $\bigg[\tilde{L}_{t}[[Y]]^{\ast },\tilde{\Lambda}^{\text{out}\left[ Y\right] } (t) \bigg]$, then we note that this corresponds to
\begin{eqnarray*}
\bigg[\tilde{L}_{t}[[Y]]^{\ast },\tilde{\Lambda}^{\text{out}\left[ Y\right] } (t) \bigg]=
U^{[Z] \ast}_t \bigg[ L_{t}[[Z]]^{\ast }, \Lambda  (t) \bigg] U^{[Z]}_t ,
\end{eqnarray*}
but the almost trivial observation at this point is that we have taken $Z \equiv \Lambda$ the present case,
and so $\Lambda (t)$ will commute with $S_{t}[[Z]] , L_{t}[[Z]],  H_{t}[[Z]]$
and their adjoints. This leaves us with (\ref{eq:Lambda_Y}) as claimed.
\end{proof}

We see that (\ref{eq:Lambda_Y}) is structurally identical to (\ref{eq:Lambda_uncontrolled}).

The Heisenberg equations under the controlled flow however will be identical to those derived for the
quadrature case, but with $Y$ now an inhomogeneous Poisson process rather than a diffusion.

\section{Quantum PID Filter}

In this section, we show how to describe one of the basic control feedback
loop mechanisms, PID controllers, in the quantum domain. In its classical
form, we have a modulated control signal of the form
\begin{eqnarray*}
\dot{W}_{t}=k_{P}Y_{t}+k_{I}\int_{0}^{t}Y_{s}ds+k_{D}\dot{Y}_{t}
\label{eq:W_PID}
\end{eqnarray*}
which is the sum of three terms: one proportional to $Y$, one an integral of 
$Y$, and one the derivative of $Y$. The proportional and integral terms can be
modelled following the theory set out in this paper. The derivative term
however is more singular and has to be treated separately. 

\subsection{Quadrature Feedback}

We consider the case $Y_{t}=B_{t}^{\text{out}}+B_{t}^{\text{out}\ast }$
corresponding to quadrature measurement. Here we must set $Z_{t}=B_{t}+B_{t}^{\ast }$ in the input picture. We shall replace the coefficients $k_{P},k_{I},k_{D}$ now with self-adjoint operators $F_{P},F_{I}
$ and $F_{D}$ respectively in $\frak{A}_{0}$.

Our choice for the adapted $SLH$ coefficients will be
\begin{eqnarray*}
S_{t}\left[ \left[ Z\right] \right]  &=&I\otimes I, \\
L_{t}\left[ \left[ Z\right] \right]  &=&L_{0}\otimes I-iF_{D}\otimes I, \\
H_{t}\left[ \left[ Z\right] \right]  &=&H_{0}\otimes I+\frac{1}{2}\left(
F_{D}L_{0}+L_{0}^{\ast }F_{D}\right) \otimes I +F_{P}\otimes Z_{t}+F_{I}\otimes
\int_{0}^{t}Z_{s}ds.
\end{eqnarray*}
The basis for this is that the derivative term is treated in the same way as
in our description of Wiseman's proportional to $\dot{Y}$ feedback: it
results in an additional term in the $L$-operator, and an addition to the
Hamiltonian. We may view this as a bare model $G_{0}\sim \left(
I,L_{0},H_{0}\right) $ into which we feedback the measurement process.

For the output noise, we have $dB^{\text{out}}\left( t\right) =dB\left(
t\right) +j_{t}^{\left[ \left[ Y\right] \right] }\left( L_{0}-iF_{D}\right)
dt$ so that $dY_{t}=dZ_{t}+j_{t}^{\left[ \left[ Y\right] \right] }\left(
L_{0}+L_{0}^{\ast }\right) dt$: that is the $F_{D}$ terms vanish in
accordance with our consistency results from earlier.

The Heisenberg dynamics is then
\begin{eqnarray*}
dj_{t}^{\left[ \left[ Y\right] \right] }\left( X\right)  &=&j_{t}^{\left[ %
\left[ Y\right] \right] }\left( \mathscr{L}X\right) dt-ij_{t}^{\left[ \left[
Y\right] \right] }\left( \left[ X,F_{P}\right] \right) Y_{t}\, dt -ij_{t}^{\left[ %
\left[ Y\right] \right] }\left( \left[ X,F_{I}\right] \right) \left(
\int_{0}^{t}Y_{s}ds\right) dt \\
&&+j_{t}^{\left[ \left[ Y\right] \right] }\left( \left[ X,L_{0}-iF_{D}\right]
\right) dB_{t}^{\ast }+j_{t}^{\left[ \left[ Y\right] \right] }\left( \left[
L_{0}^{\ast }+iF_{D},X\right] \right) dB_{t}.
\end{eqnarray*}
where
\begin{eqnarray*}
\mathscr{L}X &=&\frac{1}{2}\left( L_{0}^{\ast }+iF_{D}\right) \left[
X,L_{0}-iF_{D}\right] +\frac{1}{2}[L_{0}^{\ast }+iF_{D},X]\left(
L_{0}-iF_{D}\right) -i\left[ X,H_{0}+\frac{1}{2}F_{D}L_{0}+\frac{1}{2}%
L_{0}^{\ast }F_{D}\right]  \\
&\equiv &\mathscr{L}_{0}X-\frac{1}{2}\left[ \left[ X,F_{D}\right] ,F_{D}\right] +i%
\left[ F_{D},X\right] L_{0}-iL_{0}^{\ast }\left[ X,F_{D}\right] .
\end{eqnarray*}
Here $\mathscr{L}_{0}$ is the bare GKS-Lindblad generator determined by
coupling $L_{0}$ and Hamiltonian $H_{0}$. With is we may write
\begin{eqnarray*}
dj_{t}^{\left[ \left[ Y\right] \right] }\left( X\right)  &=&j_{t}^{\left[ %
\left[ Y\right] \right] }\left( \mathscr{L}_{0}X\right) dt-\frac{1}{2}j_{t}^{%
\left[ \left[ Y\right] \right] }\left( \left[ \left[ X,F_{D}\right] ,F_{D}%
\right] \right) dt-ij_{t}^{\left[ \left[ Y\right] \right] }\left( \left[
X,F_{P}\right] \right) Y_{t} \, dt -ij_{t}^{\left[ \left[ Y\right] \right] }\left( %
\left[ X,F_{I}\right] \right) \left( \int_{0}^{t}Y_{s}ds\right) dt \\
&&-i\left( dB_{t}+j_{t}^{\left[ \left[ Y\right] \right] }\left( L_{0}\right)
dt\right) ^{\ast }\,j_{t}^{\left[ \left[ Y\right] \right] }\left( \left[
X,F_{D}\right] \right) -i\,j_{t}^{\left[ \left[ Y\right] \right] }\left( %
\left[ X,F_{D}\right] \right) \,\left( dB_{t}+j_{t}^{\left[ \left[ Y\right] %
\right] }\left( L_{0}\right) dt\right)  \\
&&+j_{t}^{\left[ \left[ Y\right] \right] }\left( \left[ X,L_{0}\right]
\right) dB_{t}^{\ast }+j_{t}^{\left[ \left[ Y\right] \right] }\left( \left[
L_{0}^{\ast },X\right] \right) dB_{t}.
\end{eqnarray*}
Note that 
\begin{eqnarray*}
&&-i\left( dB_{t}+j_{t}^{\left[ \left[ Y\right] \right] }\left( L_{0}\right)
dt\right) ^{\ast }\,j_{t}^{\left[ \left[ Y\right] \right] }\left( \left[
X,F_{D}\right] \right) -i\,j_{t}^{\left[ \left[ Y\right] \right] }\left( %
\left[ X,F_{D}\right] \right) \,\left( dB_{t}+j_{t}^{\left[ \left[ Y\right] %
\right] }\left( L_{0}\right) dt\right)  \nonumber\\
&=&-i\left( dB_{t}^{\text{out}%
}+j_{t}^{\left[ \left[ Y\right] \right] }\left( F_{D}\right) dt\right)
^{\ast }\,j_{t}^{\left[ \left[ Y\right] \right] }\left( \left[ X,F_{D}\right]
\right) -i\,j_{t}^{\left[ \left[ Y\right] \right] }\left( \left[ X,F_{D}%
\right] \right) \,\left( dB_{t}^{\text{out}}+j_{t}^{\left[ \left[ Y\right] %
\right] }\left( F_{D}\right) dt \right)  \\
&=&-i\,j_{t}^{\left[ \left[ Y\right] \right] }\left( \left[ X,F_{D}\right]
\right) \left( dB_{t}+dB_{t}^{\text{out}\ast }\right) +j_{t}^{\left[ \left[ Y%
\right] \right] }\left( \left[ \left[ X,F_{D}\right] ,F_{D}\right] \right) dt
\\
&=&-i\,j_{t}^{\left[ \left[ Y\right] \right] }\left( \left[ X,F_{D}\right]
\right) \,dY_{t}+j_{t}^{\left[ \left[ Y\right] \right] }\left( \left[ \left[
X,F_{D}\right] ,F_{D}\right] \right) dt
\end{eqnarray*}
so that we obtain
\begin{eqnarray}
dj_{t}^{\left[ \left[ Y\right] \right] }\left( X\right)  &=&j_{t}^{\left[ %
\left[ Y\right] \right] }\left( \mathscr{L}_{0}X\right) dt+j_{t}^{\left[ %
\left[ Y\right] \right] }\left( \left[ X,L_{0}\right] \right) dB_{t}^{\ast
}+j_{t}^{\left[ \left[ Y\right] \right] }\left( \left[ L_{0}^{\ast },X\right]
\right) dB_{t} \nonumber \\
&&+\frac{1}{2}j_{t}^{\left[ \left[ Y\right] \right] }\left( \left[ \left[
X,F_{D}\right] ,F_{D}\right] \right) dt-i\left\{ j_{t}^{\left[ \left[ Y%
\right] \right] }\left( \left[ X,F_{P}\right] \right) Y_{t}dt +j_{t}^{\left[ %
\left[ Y\right] \right] }\left( \left[ X,F_{I}\right] \right) \left(
\int_{0}^{t}Y_{s}ds\right) dt+j_{t}^{\left[ \left[ Y\right] \right] }\left( %
\left[ X,F_{D}\right] \right) \,dY_{t}\right\} .\nonumber \\
\end{eqnarray}

We therefore obtain the bare dynamics with the desired PID contribution - the
term in braces. We also pick up a back action term $\frac{1}{2}\left[ \left[
X,F_{D}\right] ,F_{D}\right] $.

\bigskip 

As an example, we can consider a cavity mode $a$ with $L_{0}=\sqrt{\gamma
_{0}}a$ and $H_{0}=\omega _{0}a^{\ast }a$ and $F_{P}=k_{P} (a +a^{\ast })$, $%
F_{I}=k_{I}(a+a^{\ast })$ and $F_{D}=k_{D}(a+a^{\ast })$. Setting $\tilde{a}%
_{t}=j_{t}^{\left[ \left[ Y\right] \right] }\left( a\right) $ gives
\begin{eqnarray}
d\tilde{a}_{t}=-\left( \frac{1}{2}  \gamma _{0} 
+i\omega _{0}\right) \tilde{a}_{t}dt-\sqrt{\gamma _{0}}dB_{t}-i dW_t ,
\end{eqnarray}
with $dW_t = k_{P}Y_{t}dt+k_{I}\left( \int_{0}^{t}Y_{s}ds\right) dt+k_{D}dY_{t}  $ corresponding to
the PID filtered process as in (\ref{eq:W_PID}).

Alternatively, we could take $F_{P}=k_{P}a^{\ast }a$, $F_{I}=k_{I}a^{\ast }a$ and $F_{D}=k_{D}a^{\ast }a$,
then we find
\begin{eqnarray}
d\tilde{a}_{t}=-\left( \frac{1}{2}\left( \gamma _{0}-k_{D}^{2}\right)
+i\omega _{0}\right) \tilde{a}_{t}dt-\sqrt{\gamma _{0}}dB_{t}-i 
\tilde{a}_{t} \, dW_t,
\end{eqnarray}
with $W_t$ again being the PID filtered process in (\ref{eq:W_PID}).
Here the derivative term has altered the damping strength. Otherwise, the
three terms enter just into the Hamiltonian $H_{t}\left[ \left[ Z\right] %
\right] $ leading to a PID filtered version of $Y$ entering as an additional
term to the frequency $\omega _{0}$.

\section{Quantum Feedback Network Rules for Controlled Models}
In this section we make some rudimentary observations about what the quantum feedback network rules
should look like when the various $SLH$-coefficients of the components are allowed to be adapted processes.
We stress that we can only give a sketch of mathematics behind this - the problem of working with a fully
rigorous model establishing the self-adjointness of the underlying Hamiltonian and the associated
instantaneous feedback limits is well beyond current mathematics in quantum probability. However,
leaving aside any pretense at rigor, we can make some reasonable deductions on what to expect.

The derivation of the quantum feedback network rules relies heavily on the
Hamiltonian formulation of quantum stochastic calculus derived by Chebotarev 
\cite{Cheb97}, for the case of commuting coupling coefficients, and by
Gregoratti \cite{Gregoratti} for the general bounded operator case. (The
requirement of boundedness was later lifted \cite{QG}.) The unitary
stochastic process $U_{t}$ generated by coefficients $G\sim \left(
S,L,H\right) $ was shown to be a singular perturbation of the generator of
the free shift along the $x$-axis (with the input being the positive axis
and the output line being the negative). 
We have the space $\mathfrak{h}_{0}\otimes \Gamma \left( L_{%
\mathbb{C}^{n}}^{2}\left( \mathbb{R}\right) \right) $ which may be thought
of as consisting of vectors $\Psi =\left( \Psi _{m}\right) _{m\geq 0}$ where 
$\Psi _{m}$ is in $\mathfrak{h}_0\otimes \left( \bigotimes_{\mathrm{symm.}%
}^{m}L_{\mathbb{C}^{n}}^{2}(\mathbb{R})\right) $. That is, we have the
functions 
\begin{eqnarray*}
\Psi _{m} &:&\mathbb{R}^{m}\times \{1,\cdots ,n\}\mapsto \mathfrak{h}_0, \\
&:&(t_{1},\cdots ,t_{m},k_{1},\cdots ,k_{m})\mapsto \Psi _{k_{1},\cdots
,k_{m}}(t_{1},\cdots ,t_{m}),
\end{eqnarray*}
which are symmetric under interchange of the parameters $(t_{j},k_{j})$. We
define the operators 
\begin{eqnarray*}
\left( b_{k}(s)\,\Psi \right) _{k_{1},\cdots ,k_{m}}(t_{1},\cdots
,t_{m})=\Psi _{k_{1},\cdots ,k_{m},k_{m+1}=k}(t_{1},\cdots ,t_{m},t_{m+1}=s),
\end{eqnarray*}
where appropriate along with the one-sided annihilators $b_{i}(0^{\pm })$.
Specifically, their Hamiltonian, $K$
is given by 
\begin{eqnarray}
-iK\Psi =-i\tilde{K}_{0}\Psi -(\frac{1}{2}\sum_{k}L_{k}^{\ast }L_{k}+iH)\Psi
-\sum_{kj}L_{k}^{\ast }S_{kj}b_{j}(0^{+})\Psi ,
\end{eqnarray}
where 
\begin{eqnarray*}
\tilde{K}_{0}=\left( \int_{-\infty }^{0}+\int_{0}^{+\infty }\right)
b_{j}(x)^{\ast }\left( i\frac{\partial }{\partial x}\right) b_{j}(x)\,dx,
\end{eqnarray*}
Note that $\tilde{K}_{0}$ generates of translation down the $x$-axis and is
a second quantisation of the momentum operator. The states $\Psi $ live in
the Hilbert space $\mathfrak{h}_{0}\otimes \Gamma \left( L_{\mathbb{C}%
^{n}}^{2}\left( \mathbb{R}\right) \right) $ and the domain of $K$ consists
of suitably regular vectors which satisfy a supplementary boundary condition 
\begin{eqnarray}
b_{j}(0^{-})\Psi =L_{j}\Psi +\sum_{k}S_{jk}\,b_{k}(0^{+})\Psi .
\label{eq:BC}
\end{eqnarray}

The quantum feedback network theory builds on this to consider a graph with
separate quantum systems at the vertices (as separate $G_{k}\sim \left(
S_{k},L_{k},H_{k}\right) $ for each one) and propagating Bose fields along
the edges. On each edge we have a sense of direction of propagation. Some of
the edges run between vertices - these are the internal ones - while some
extend to infinity. The semi-infinite edges then correspond to either input
lines (terminating at a vertex) or an output lines (starting at a vertex).
The lines can have arbitrary multiplicity for the number of Bose fields they
carry and the vertices can have several incoming and outgoing edges; but the
total multiplicity in must equal the total multiplicity out. Feedback
reduction then takes place by shortening each of the edges down to zero
length (instantaneous feedback limit). At each vertex, we have a boundary
condition of the form (\ref{eq:BC}). When we eliminate an edge, we get a
reduction of order of the graph and a telescoping of the boundary conditions
in a systematic manner. Eliminating all internal edges should result in a
Markovian model which is an effective model for the network in the
instantaneous propagation limit.

For the case of a controlled flow for a single component $G_{t}\left[ \left[
Z\right] \right] \sim \left( S_{t}\left[ \left[ Z\right] \right] ,L_{t}\left[
\left[ Z\right] \right] ,H_{t}\left[ \left[ Z\right] \right] \right) $
considered in this paper it is natural to argue that the corresponding
Hamiltonian should be
\begin{eqnarray*}
-iK_{t}\left[ \left[ Z\right] \right] \,\Psi =-i\tilde{K}_{0}\Psi -(\frac{1}{%
2}\sum_{k}L_{k,t}^{\ast }\left[ \left[ Z\right] \right] L_{k,t}\left[ \left[
Z\right] \right] +iH_{t}\left[ \left[ Z\right] \right] )\Psi
-\sum_{kj}L_{k,t}^{\ast }\left[ \left[ Z\right] \right] S_{kj,t}\left[ \left[
Z\right] \right] b_{j}(0^{+})\Psi ,
\end{eqnarray*}
with the boundary condition
\begin{eqnarray*}
b_{j}(0^{-})\Psi =L_{j,t}\left[ \left[ Z\right] \right] \Psi
+\sum_{k}S_{jk,t}\left[ \left[ Z\right] \right] \,b_{k}(0^{+})\Psi .
\end{eqnarray*}
Note that the Hamiltonian, and its boundary condition are time dependent.
They also depend on the control process $Z$ however this is, in principle,
amenable to rigorous formulation. Note that it is essential that we
construct the Hamiltonian $K$ in the input picture!

The step up to a non-Markovian network of such is now evident, and in
principle the edge elimination should proceed in a similar manner as before
leading to the same algebraic form for the rules.

\bigskip 

For instance, the series product for systems $G_{t}^{\left( A\right) }\left[ %
\left[ Z\right] \right] $ and $G_{t}^{\left( B\right) }\left[ \left[ Z\right]
\right] $ should be
\begin{eqnarray*}
&& \left( S_{t}^{B} \left[ \left[ Z\right] \right] ,L_{t}^{B}\left[ \left[ Z\right] %
\right] ,H_{t}^{B}\left[ \left[ Z\right] \right] \right) \vartriangleleft \left(
S_{t}^{A}\left[ \left[ Z\right] \right] ,L_{t}^{A}\left[ \left[ Z\right] \right]
,H_{t}^{A}\left[ \left[ Z\right] \right] \right) \nonumber \\
 &=& \bigg( S_{t}^{B }%
\left[ \left[ Z\right] \right] S_{t}^{A }\left[ \left[ Z\right]
\right] ,L_{t}^{B }\left[ \left[ Z\right] \right]
+S_{t}^{B }\left[ \left[ Z\right] \right] L_{t}^{A }\left[ \left[ Z\right] \right] ,  
H_{t}^{A }\left[ \left[ Z\right] \right] +H_{t}^{B }\left[ \left[ Z\right] \right] +\text{Im} L_{t}^{B}%
\left[ \left[ Z\right] \right] S_{t}^{B }\left[ \left[ Z\right]
\right] L_{t}^{A }\left[ \left[ Z\right] \right] \bigg) .
\end{eqnarray*}

\begin{figure}[htbp]
	\centering
		\includegraphics[width=0.5\textwidth]{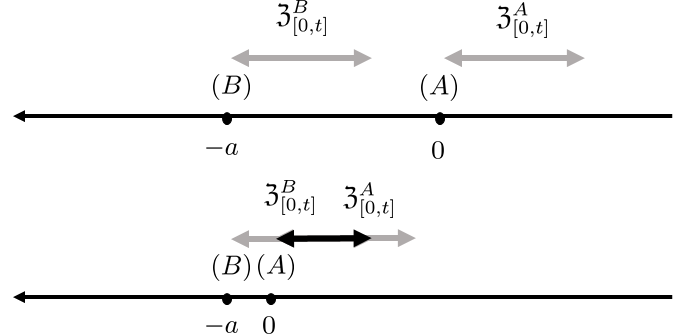}
	\caption{Systems in series separated by distance $a$. The control process algebras are sketched for $0<t<a$ and $t>a$.}
	\label{fig:NM_QFN_II_Series_Limit}
\end{figure}

The cascaded systems may be considered as separated by a distance $a$ before the limit. Taking the speed of propagation to
be $c=1$, we see that the input algebras to $A$ and $B$ at time $t$ are algebras generated by $\{ Z_s: 0\le s \le t\}$
and $\{ Z_s : -a \le s \le t-a\}$ respectively. For $t$ less that $a$, there is no intersection, however, for $t>a$
they overlap and after $a \to 0^+$, for $t$ fixed, the two algebras must coincide, see Figure \ref{fig:NM_QFN_II_Series_Limit}.

We can also consider a beam splitter, as in Figure \ref{fig:NM_QFN_II_beam_splitter}. While there are no interconnections
here, it is worth mentioning that we may want to select different measurements at he two output ports. For instance, at the first output we may measure the quadrature $Y_1(t) = B^{\text{out}}_1(t)+ B^{\text{out} \ast}_1 (t)$ and at the second perform a
photon counting measurement $Y_2 (t) = \Lambda_{22}(t)$. In this case we should take $Z_1 (t) = B_t (t) + B_i^\ast (t)$
and $Z_2 (t) = \Lambda_{22} (t)$. This may seem strange as we are not measuring the inputs, and the output fields are a linear
superposition of the input fields, however, this is the construction we have to make if we revert to the input picture. In this
case,the two measurement algebras are independent factors of the total noise algebra.

\begin{figure}[htbp]
	\centering
		\includegraphics[width=0.25\textwidth]{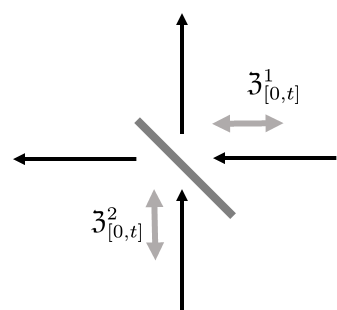}
	\caption{Inputs into a beam splitter.}
	\label{fig:NM_QFN_II_beam_splitter}
\end{figure}

In Figure \ref{fig:NM_QFN_II_in_loop} we have a simple network consisting of a beam splitter $B$ and a cavity $A$.
The cavity is put into an optical feedback loop using the beam splitter. While the time to propagate around the 
loop is finite, the model will be non-Markovian - unless the loop field is somehow incorporated into the system.
We have sketched the sections of the edges where the input algebras for $A$ and $B$ live, and for $t$ less that the propagation
time from $B$ to $A$ they do not intersect. In the instantaneous feedback limit the loop shrinks to zero.
As such, the algebra $\mathfrak{Z}^A_{[0,t]}$ gets pushed back out of the loop and eventually coincides with 
$\mathfrak{Z}^B_{[0,t]}$.

\begin{figure}[htbp]
	\centering
		\includegraphics[width=0.25\textwidth]{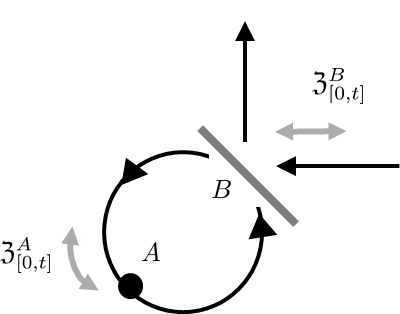}
	\caption{Algebraic feedback loop}
	\label{fig:NM_QFN_II_in_loop}
\end{figure}

One therefore reasonably expects the new quantum feedback network rules to be formally the same as for the 
standard concatenation and feedback reduction rules appearing in \cite{GouJam09a}, with the modeling proviso
that functional form the components in terms of $Y$ is decided upon in the output picture, then deduced for the input picture and 
incorporated in to the $SLH$-coefficients to make them dependent on the various control processes $Z$.

\section{Conclusion}
Our discussions have shown that there is an inner consistency within the quantum feedback set-up provided
one correctly distinguishes between the input picture and the output picture. Both are invaluable as far
as model building and analysis are concerned, but it is necessary to understand the connection between these
to have an overview of quantum feedback systems.

Unlike the Wiseman paper, where the feedback Hamiltonian is proportional to $\dot{Y}$ (in the output picture),
we consider feedback models that are regular. Wiseman's theory has effectively been re-derived in the input picture 
by us in our paper introducing the series product construction \cite{GouJam09b}. The present paper opens up
a more general theory of quantum feedback where a modulating filter processes the measurement readout, and
allows us to consider a broader range of feedback scenarios. 

A noteworthy feature is that we have not had to go into any specificity about the modulating filter. It could be
classical, in which case we just need the input-output relation giving $W_t$ as process adapted to $Y$. And
in particular, we do not have to worry about the mechanism by which the filter works, or how we would model
a hybrid classical and quantum system. It could also be quantum in which case its degrees of freedom would
have to be taken into account - however these would be independent of the system's and would easily be 
handled by an augmentation of the results presented in \ref{sec:ConDyn} absorbing the algebra of the
filter observables which of course commutes with $\mathfrak{A}_0$ and $\mathfrak{N}_{[0,\infty )}$.

The theory set out here provides for great flexibility in modeling and designing quantum control systems.
The issue of how exactly we physically realize the filter, or how exactly we couple the modulated signal to
the $SLH$-coefficients (that is, engineer a specific $Z$-dependence on the $SLH$-coefficients) has not been 
addressed here. However, the power of the theory is that the specifics are not particularly relevant, and that
a consistent systematic approach exists which is not dependent on these details.  This now enables us to
apply a wide range of standard control engineering methods to quantum open systems.

\bigskip

\textbf{Acknowledgements:} The author gratefully acknowledges several key discussions relating to this paper:
especially with Luc Bouten on the quantum feedback network rules for adapted $SLH$ coefficients as far back as 2011
and up to recent times,
with Hendra Nurdin, and with Matt James on the need for a theory of quantum feedback flexible enough for a general
formulation of quantum control engineering applications.

\end{document}